\newtheorem{theorem}{Theorem}
\newtheorem{proposition}{Proposition}
\newtheorem{conjecture}{Conjecture}
\newtheorem{corollary}{Corollary}
\newtheorem{lemma}{Lemma}
\theoremstyle{remark}
\newtheorem{example}{Example}
\newtheorem{remark}{Remark}
\def\m{\mathcal}
\def\dfn{\stackrel{\mathrm{def}}{=}}
\def\Expt{\mathbb{E}}
\def\guess{\mathsf{G}}
\def\ind{\mathds{1}}
\def\mid{\,|\,}
\newcommand{\ord}{\mathsf{ORD}}
\newcommand{\cut}{\mathsf{CUT}}
\newcommand{\maxcut}{\mathsf{MAXCUT}}
\newcommand{\opt}{\mathsf{opt}}
\newcommand{\zz}{\mathsf{ZZ}}
\DeclareMathOperator{\argmax}{argmax}
\title{Reducing Guesswork via an Unreliable Oracle}
\author{Amir~Burin and Ofer~Shayevitz \thanks{This work has been supported by an ERC grant no. 639573, a CIG grant no. 631983, and an ISF grant no. 1367/14. The authors are with the Department of Electrical Engineering - Systems, Tel Aviv University, Tel Aviv, Israel (emails: amirburin@gmail.com, ofersha@eng.tau.ac.il).}}
\date{}
\begin{document}

\allowdisplaybreaks 	
	\maketitle
	
	\begin{abstract}
		Alice holds an random variable $X$, and Bob is trying to guess its value by asking questions of the form ``is $X=x$?''. Alice answers truthfully and the game terminates once Bob guesses correctly. Before the game begins, Bob is allowed to reach out to an oracle, Carole, and ask her any yes/no question, i.e., a question of the form ``is $X\in A$?''. Carole is known to lie with a given probability $p$. What should Bob ask Carole if he would like to minimize his expected guessing time? When Carole is always truthful ($p=0$), it is not difficult to check that Bob should order the symbol probabilities in descending order, and ask Carole whether the index of $X$ w.r.t this order is even or odd. We show that this strategy is almost optimal for any lying probability $p$, up to a small additive constant upper bounded by a $1/4$. We discuss a connection to the cutoff rate of the BSC with feedback. 
	\end{abstract}
	
	\section{Introduction and Main Result}
	In the classical guessing game introduced and studied by Massey~\cite{massey1994guessing}, Alice is in possession of a discrete random variable (r.v.) $X$, and Bob would like to guess its value as quickly as possible. He is allowed to guess one symbol at a time, namely to ask Alice questions of the form ``is $X=x$?''; Alice truthfully answers Bob's guesses with ``yes'' or ``no'', and the game terminates when Bob guesses correctly. Bob's optimal strategy, in the sense of minimizing his expected guessing time, is to guess the symbols in decreasing order of probability~\cite{massey1994guessing}. Suppose now that before the game begins, Bob can reach out to an Oracle, Carole, and ask her a yes/no question of his choosing regarding $X$. What is the best question for him to ask in order to minimize his expected guessing time? Ordering the probability distribution of $X$ in descending order, it is not difficult to show that Bob is better off using the \textit{zigzag} query, i.e., asking whether $X$ has an even index or an odd index w.r.t. this order (there are other equally good queries). But, suppose now that Carole lies with some known probability $p$. What should Bob ask in this case? This seemingly simple problem turns out to be nontrivial. In this paper, we show that the zigzag query is almost optimal, in the sense of minimizing Bob's expected guessing time up to an additive constant of $\frac{\left|1-2p\,\right|}{4}$, independent of the cardinality of $X$. This is done by formulating the problem as that of finding a maximum cut in a certain weighted graph, and bounding the weight of the maximum cut via quadratic relaxation using special properties of the graph. We conjecture that the zigzag query is in fact exactly optimal for any $p$.
	
	Admittedly, on an intuitive level the optimality of the zigzag query may feel almost trivial. In order to get a glimpse of why this problem is not as easy as it might seem at a first glance, consider a simplified version where the only questions Bob is allowed to ask Carole are of the form ``is $X=x$?''. Namely, the question he can ask Carole (and get a noisy answer to) is also of the guessing form used in the game with Alice. What should Bob ask? It is perhaps initially tempting to think Bob is better off asking ``is $X=x_{\max}$?'', where $x_{\max}$ is the symbol with the maximal probability. This is of course correct when $p=0$, but is not true in general. To see why, consider an extreme case where $x_{\max}$ has probability very close to one, and where $p$ is sufficiently close to $1/2$. In this case, the posterior distribution of $X$ after receiving a very noisy answer to the question ``is $X=x_{\max}$?'', has exactly the same order as the prior distribution. In other words, Bob already knew that $x_{\max}$  is the most likely symbol, and Carole's answer is too noisy to change this belief. Thus, the optimal guessing strategy before and after receiving the answer remains the same, and therefore so does the expected guessing time. However, if for example there are two other symbols of equal probability, which a priori have no preferable order among themselves, then asking Carole about any of them would be beneficial for Bob as it would determine a preferable order. 
	
	What is the optimal ``is $X=x$?''  question then? The answer is somewhat counterintuitive. As we later demonstrate in Example ~\ref{exmpl:simple_problem_any_member_can_be_best_query}, for any $p>0$ and positive integers $k\leq N$, it is not difficult to show that there exists an r.v. $X$ of cardinality $N$ such that asking whether $X$ equals the $k$th largest symbol is the optimal question. The underlying reason yet again is that the probability of a symbol is not so important; what matters the most is whether the order of the symbol changes between the prior distribution of $X$ and the posterior distribution of $X$ given the answer, and how many other symbols it ``passes'' on its way up or down the probability order. This reveals the combinatorial nature of the problem, which arguably is what makes it more difficult. 
	
	We proceed to formally define the problem. Let $X$ be an r.v. taking values in a finite alphabet $\m{X}$, where without loss of generality will be assumed throughout to be $\m{X} = \{1,\ldots,N\}$, and let $P_X$ denote its probability mass function. A {\em guessing strategy} for Bob is any bijective function $g:\m{X}\to \{1,\ldots,N\}$, determining the order of guessing. Now, define $\guess(X)$ to be the minimal expected time required for Bob to correctly guess the value of $X$, i.e.,\footnote{Note that in the literature, $G(\cdot)$ typically denotes a guessing function (strategy), and $\Expt (G(X))$ is used as the expected guessing time.} 
	\begin{align}
		\guess(X) \dfn \min_{g} \Expt{\left(g(X)\right)}. 
	\end{align}
	where the minimum is taken over all guessing strategies. Let $\ord_X:\m{X}\to\{1,\ldots,N\}$ be the \textit{order function} of $X$, namely where $\ord_X(x)$ is the index associated with the probability $P_X(x)$ when the probabilities are ordered in a descending order, and where ties are resolved arbitrarily, say by taking the symbol with the smaller index to have a smaller order. It was observed by Massey~\cite{massey1994guessing} that $\ord_X$ is Bob's best guessing strategy, i.e., 
	\begin{align}
	\guess(X) = \Expt{\left(\ord_X(X)\right)}. 
	\end{align}
	This follows by noting that for any guessing strategy $g$ for which $P_X(j) \geq P_X(k)$ but $g(j) > g(k)$, replacing the guessing order of $j$ and $k$ cannot increase the expected guessing time.

We note in passing the following properties of $\guess(X)$. The proofs are relegated to the appendix.
\begin{proposition}\label{lem:guessing_time_prop}
Let $X,X'$ take values in the same finite alphabet, and suppose that the transition probability matrix $P_{X'|X}(x'\mid x)$ is doubly stochastic. Then $\guess(X)\leq \guess(X')$. 
\end{proposition}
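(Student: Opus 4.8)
The plan is to express the guessing time as a pointwise minimum of linear functionals of the pmf, which makes it a concave, permutation-symmetric function of the probability vector, and then invoke the Birkhoff--von Neumann theorem. Concretely, identify $\m{X}$ with $\{1,\ldots,N\}$ and view $P_X$ as a vector in the probability simplex. By the rearrangement inequality, for every bijection $\pi:\m{X}\to\{1,\ldots,N\}$ one has $\sum_x P_X(x)\pi(x)\geq \sum_x P_X(x)\ord_X(x)$, since $\ord_X$ assigns the smallest indices to the largest probabilities; hence
\begin{align}
\guess(X)=\sum_x P_X(x)\,\ord_X(x)=\min_{\pi}\ \sum_x P_X(x)\,\pi(x),
\end{align}
the minimum being over all bijections $\pi$ (equivalently, permutations in $S_N$). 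Each map $P_X\mapsto\sum_x P_X(x)\pi(x)$ is linear, so $\guess$, as a function of the pmf, is a minimum of finitely many linear functionals — hence concave — and it is clearly invariant under relabeling of $\m{X}$.

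Next I would exploit double stochasticity. Since $M \dfn P_{X'\mid X}$ is an $N\times N$ doubly stochastic matrix and $P_{X'}=P_X M$ (reading $P_X$ as a row vector), the Birkhoff--von Neumann theorem expresses $M=\sum_{j}\lambda_j \Pi_j$ as a convex combination of permutation matrices $\Pi_j$. Consequently $P_{X'}=\sum_j\lambda_j\,(P_X\Pi_j)$ is a convex combination of permutations of the vector $P_X$.

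Finally I would combine the two facts: writing $f$ for $\guess$ viewed as a function of the pmf vector,
\begin{align}
\guess(X')=f(P_{X'})=f\Big(\textstyle\sum_j\lambda_j\,P_X\Pi_j\Big)\ \geq\ \sum_j\lambda_j\, f(P_X\Pi_j)=\sum_j\lambda_j\, f(P_X)=\guess(X),
\end{align}
where the inequality is concavity of $f$ and the penultimate equality is its permutation-invariance. This yields $\guess(X)\leq\guess(X')$.

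The only delicate points are bookkeeping ones: checking that the rearrangement step handles ties among symbol probabilities gracefully — it does, since the value $\sum_x P_X(x)\ord_X(x)$ does not depend on how ties are broken — and confirming that the orientation $P_{X'}=P_X M$ versus $M^\top$ is immaterial because ``doubly stochastic'' is symmetric in rows and columns. Equivalently one could phrase the argument through majorization: a doubly stochastic image satisfies $P_{X'}\prec P_X$, and $\guess$ is Schur-concave precisely because it is symmetric and concave; I would use whichever phrasing fits the surrounding text best. I do not anticipate any real obstacle here.
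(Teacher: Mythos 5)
Your proof is correct, and at its core it takes the same majorization route as the paper's appendix proof: both arguments rest on the fact that a doubly stochastic transition matrix maps $P_X$ to a vector it majorizes, combined with the permutation-symmetry and curvature of $\guess$ as a function of the pmf. The useful difference is that you unpack the Schur step explicitly: the representation $\guess(X)=\min_\pi\sum_x P_X(x)\pi(x)$ correctly identifies $\guess$ as \emph{concave} (a pointwise minimum of linear functionals) and hence Schur-concave, and the Birkhoff--von Neumann decomposition then delivers the inequality directly. The paper's proof instead asserts that $\guess$ is convex and Schur-convex; combined with $P_{X'}\prec P_X$ that would yield the reverse inequality $\guess(X')\leq\guess(X)$, so the paper's wording has the curvature direction flipped (the stated conclusion is nevertheless the right one). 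In short, your version is the corrected and self-contained form of the intended argument: Schur-concavity gives $f(P_{X'})\geq f(P_X)$ from $P_{X'}\prec P_X$, i.e., $\guess(X)\leq\guess(X')$, and it needs only the rearrangement inequality and Birkhoff--von Neumann rather than a citation to the theory of Schur-convex functions. Your side remarks on tie-breaking and on the row/column orientation of the doubly stochastic matrix are both accurate and harmless.
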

\begin{corollary}\label{cor:guess_simple_bounds}
  $1\leq \guess(X) \leq (N+1)\slash 2$. The lower bound is attained if and only if $X$ is deterministic, and the upper bound is attained if and only if $X$ is uniform. 
\end{corollary}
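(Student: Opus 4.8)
The plan is to derive both bounds directly from the explicit formula $\guess(X)=\sum_{i=1}^{N} i\, p_i$, where $p_1\ge p_2\ge\cdots\ge p_N$ are the probabilities of $X$ listed in non-increasing order; this representation is legitimate because $\ord_X$ maps bijectively onto $\{1,\dots,N\}$, so the symbol of order $i$ has probability $p_i$, and any ambiguity in breaking ties among equal probabilities does not affect the weighted sum.

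For the lower bound I would simply use $\ord_X(x)\ge 1$ for every $x$, whence $\guess(X)=\sum_{x}P_X(x)\ord_X(x)\ge\sum_x P_X(x)=1$. Equality forces $\sum_x P_X(x)\bigl(\ord_X(x)-1\bigr)=0$ with every summand nonnegative, hence $P_X(x)=0$ whenever $\ord_X(x)\ge 2$; since exactly one symbol has order $1$, all the mass sits there and $X$ is deterministic, and the converse is immediate.

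For the upper bound the cleanest route is the tail-sum rewriting $\guess(X)=\sum_{i=1}^N i\,p_i=\sum_{j=1}^N\sum_{i\ge j}p_i=\sum_{j=1}^N\bigl(1-\sum_{i<j}p_i\bigr)$, combined with the elementary fact that any length-$k$ prefix sum of a non-increasing probability vector is at least $k/N$ (a prefix average dominates the global average $1/N$). Plugging $\sum_{i<j}p_i\ge (j-1)/N$ in and summing over $j$ yields $\guess(X)\le\sum_{j=1}^N\bigl(1-(j-1)/N\bigr)=(N+1)/2$. A quicker one-liner for the bound alone, though not for the equality case, is to invoke Proposition~\ref{lem:guessing_time_prop} with the doubly stochastic channel $P_{X'|X}\equiv 1/N$, whose output is uniform with guessing time $(N+1)/2$. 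Equality in the chain above holds iff $\sum_{i<j}p_i=(j-1)/N$ for every $j$, which by telescoping determines every $p_i$ to be $1/N$, i.e.\ $X$ is uniform; the converse is again a one-line computation.

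Since the arguments are short, there is no single hard step; the only place needing slight care is the equality analysis for the upper bound, where one must check that forcing equality in the prefix-sum inequality \emph{simultaneously for all $j$} genuinely pins down the uniform distribution (it does, as it fixes each partial sum $p_1+\cdots+p_{j-1}$ and hence each $p_i$). If one instead proves the upper bound through Proposition~\ref{lem:guessing_time_prop}, the obstacle merely migrates to establishing a sharp equality condition for that proposition, which is why I would prefer the self-contained computation here.
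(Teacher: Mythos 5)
Your proof is correct, and it takes a genuinely different route from the paper's. The paper derives the corollary from Proposition~\ref{lem:guessing_time_prop} via majorization: the deterministic distribution majorizes every distribution and the uniform is majorized by every distribution, so the Schur-monotonicity of $\guess$ gives both bounds, with the equality cases attributed to \emph{strict} majorization. Your argument is instead a self-contained computation: the lower bound from $\ord_X(x)\geq 1$, and the upper bound from the tail-sum identity $\sum_i i\,p_i=\sum_j\bigl(1-\sum_{i<j}p_i\bigr)$ together with the prefix-average inequality $\sum_{i\le k}p_i\geq k/N$ for a non-increasing probability vector. What your route buys is a cleaner treatment of the ``if and only if'' claims: strict majorization plus Schur-monotonicity yields only a weak inequality unless one also argues strictness of the monotonicity, a point the paper leaves implicit, whereas your equality analysis (all prefix sums pinned to $(j-1)/N$, hence every $p_i=1/N$) settles the uniform case directly, and the lower-bound equality case is immediate from nonnegativity of each summand. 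The paper's route, in exchange, is a two-line corollary of machinery it has already set up. Both are valid; yours is the more elementary and the more complete on the equality conditions.
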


The original guessing game has been extended by Arikan~\cite{arikan1996inequality} to the case where Bob has side information, in the form of another r.v. $Y$ over some alphabet $\m{Y}$, such that $(X,Y)\sim P_{XY}$ are jointly distributed. In this case, a {\em conditional guessing strategy} for Bob is a function $g:\m{X}\times \m{Y}\to \{1,\ldots,N\}$, with the property that $g(\cdot,y)$ is a bijection for any $y\in \m{Y}$, determining the order of guessing given that $Y=y$. We can similarly define $\guess(X\mid Y)$ to be the minimal expected time required for Bob to correctly guess the value of $X$ given that he knows $Y$, i.e.,   
\begin{align}
	\guess(X\mid Y) \dfn \min_g \Expt\left(g(X,Y)\right),
\end{align}
where the minimum is taken over all conditional guessing strategies. Let the \emph{conditional order function} $\ord_{X\mid Y}(x\mid y)$ be the order function pertaining to the distribution $P_{X|Y}(\cdot\mid y)$. It was observed by Arikan~\cite{arikan1996inequality} that $\ord_{X\mid Y}$ is Bob's best conditional guessing strategy, i.e., 
	\begin{align}
	\guess(X\mid Y) = \Expt{\left(\ord_{X|Y}(X\mid Y)\right)}. 
	\end{align}	
	This follows similarly by noting that for any conditional guessing strategy $g$ where $P_{X|Y}(j\mid y) \geq P_{X|Y}(k\mid y)$ but $g(j\mid y) > g(k\mid y)$, replacing the guessing order of $j$ and $k$ given $Y=y$ cannot increase the conditional expected guessing time.

	In this paper, we are concerned with side information that is actively obtained by asking a binary question and getting a noisy answer. A binary question corresponds to a {\em partition} of the alphabet into $\m{X} = A\cup \bar{A}$ for some $A\subseteq \mathcal{X}$. In the sequel, we informally refer to the set $A$ itself as the partition. Let 
	\begin{align}
	Y_A \dfn \ind(X\in A) \oplus V,
	\end{align}
	where $V\sim\text{Ber}(p)$ is independent of $X$ for some given $p$, and let 
	\begin{align}
	\guess_A(X) \dfn \guess(X\mid Y_A).
	\end{align}
	The quantity $\guess_A(X)$ is Bob's minimal expected guessing time of $X$ after asking Carole the binary question pertaining to $A$, and receiving an answer that is incorrect with probability $p$. We are interested in studying the best possible question / partition, i.e., to characterize 
	\begin{align}
	\guess_{\opt}(X)\dfn  \min_{A\subseteq\m{X}} \guess_A(X), 
	\end{align}
	as well as the question/partition that attains the minimum. From this point on, and without loss of generality, we assume that $p_k \dfn P_X(k)$ are non-increasing  $p_1\geq p_2\geq \cdots \geq p_N$. We define the \textit{zigzag partition}: 
	\begin{align}
	A_{\zz} \dfn \{k : \text{$k$ is odd}\}, 
	\end{align}
	and write $\guess_{\zz}(X)$ to denote $\guess_{A_{\zz}}(X)$. As we later show in Proposition~\ref{Proposition: Noiseless Case}, the zigzag partition is optimal when Carole is always truthful ($p=0$). More generally, we prove the following theorem. 
	\begin{theorem}[zigzag is almost optimal]\label{thrm:zz_is_almost_optimal}
		For any discrete r.v. $X$ and any $p\in[0,1]$
		\begin{align}
		\guess_{\zz}(X) \leq \guess_{\opt}(X) + \frac{\left|1-2p\,\right|}{4}.
		\end{align}
	\end{theorem}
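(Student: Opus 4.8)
The plan is to recast the quantity $\guess_{\opt}(X)$ as a maximum cut in an explicit weighted graph and then control that cut via a semidefinite/quadratic relaxation that uses a ``layer--cake'' structure of the weights. \emph{Step 1 (reduction to \textsc{MaxCut}).} Start from the identity $\guess(X)=1+\sum_{\{i,j\}}\min(p_i,p_j)$ (sum over unordered pairs; it is insensitive to ties), and expand the conditional guessing time the same way, using that $\guess(X\mid Y_A)$ is the average over $y$ of the guessing times of the posteriors $P_{X\mid Y_A=y}$. A pair $\{i,j\}$ not split by $A$ then contributes $\min(p_i,p_j)$ exactly as in the prior, whereas a split pair contributes $\min(p_i(1-p),p_jp)+\min(p_ip,p_j(1-p))$. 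Hence $\guess_A(X)=\guess(X)-\cut_A(G)$, where $G$ is the complete graph on $\{1,\dots,N\}$ with nonnegative weights $w_{ij}\dfn\bigl(\min(p_i,p_j)(1-p)-\max(p_i,p_j)p\bigr)^{+}$ and $\cut_A(G)$ is the weight of the cut induced by $A$. Thus $\guess_{\opt}(X)=\guess(X)-\maxcut(G)$ and $\guess_{\zz}(X)=\guess(X)-\cut_{\zz}(G)$, so the theorem is equivalent to $\maxcut(G)-\cut_{\zz}(G)\le\tfrac14(1-2p)$.

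\emph{Step 2 (the special structure).} The property of $G$ I would exploit is a layer--cake representation of $w$. With $\rho\dfn p/(1-p)$ and, for $t>0$, $I_t\dfn\{k:\ t\le p_k<t/\rho\}$, the fact that $p_1\ge\cdots\ge p_N$ makes each $I_t$ a block of consecutive indices, and two elementary integral computations give $w_{ij}=(1-p)\int_0^\infty\ind(i\in I_t)\ind(j\in I_t)\,dt$ for $i\neq j$ and $(1-p)\int_0^\infty\ind(i\in I_t)\,dt=(1-2p)p_i$. Writing $M$ for the weight matrix of $G$ and $\ind_{I_t}$ for the indicator vector of $I_t$, this says
\[
M+\operatorname{diag}\!\bigl((1-2p)p_1,\dots,(1-2p)p_N\bigr)=(1-p)\!\int_0^\infty \ind_{I_t}\ind_{I_t}^{\top}\,dt\ \succeq\ 0 .
\]

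\emph{Step 3 (the two bounds and conclusion).} Write $\cut_S(G)=\tfrac{W}{2}-\tfrac14 s^{\top}Ms$ for the $\pm1$ indicator $s$ of $S$, with $W\dfn\sum_{i<j}w_{ij}$. For every $s\in\{\pm1\}^N$, the positive-semidefiniteness of Step 2 yields $s^{\top}Ms=s^{\top}\!\bigl(M+\operatorname{diag}((1-2p)p_\bullet)\bigr)s-(1-2p)\sum_k p_k\ \ge\ -(1-2p)$, so $\maxcut(G)\le\tfrac W2+\tfrac14(1-2p)$. On the other hand I would show $\cut_{\zz}(G)\ge\tfrac W2$: the pairs \emph{omitted} by $A_{\zz}$ are precisely the same-parity pairs, so this amounts to $\sum_{i\equiv j}w_{ij}\le\sum_{i\not\equiv j}w_{ij}$, which follows by mapping each same-parity pair $\{i,j\}$ ($i<j$) to the opposite-parity pair $\{i,j-1\}$ — this map is injective, and $w_{i,j-1}\ge w_{ij}$ because $p_{j-1}\ge p_j$. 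Combining, $\maxcut(G)-\cut_{\zz}(G)\le\tfrac14(1-2p)$.

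\emph{Main obstacle.} Steps 1 and 3 are routine once the setup is in place; the real content is Step 2 — recognizing that, up to an explicit diagonal of total mass exactly $1-2p$, the weight matrix of $G$ is a nonnegative combination of rank-one interval indicators. That is the step I expect to require the most care: verifying the two integral identities, checking that each $I_t$ is genuinely an interval, and handling ties among the $p_k$ and the degenerate cases $p\in\{0,\tfrac12\}$. It is also worth noting that this argument bounds $\maxcut(G)$ using only the crude estimate $s^{\top}(\cdot)s\ge0$, discarding all ``interval-alignment'' information; recovering the conjectured exact optimality would presumably require keeping that information, e.g.\ by choosing the diagonal shift to make the zigzag vector lie in the kernel of the shifted matrix.
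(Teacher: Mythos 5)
Your argument is correct, and its overall architecture coincides with the paper's: reduce $\guess_\opt$ to a max-cut over $\m{G}_X$ (Theorem~\ref{thrm:graph_formulation}), obtain $\maxcut(\m{G}_X)\le\tfrac12\sum_{i<j}w_{ij}+\tfrac14(1-2p)$ from positive semidefiniteness of the weight matrix with diagonal $(1-2p)p_i$ appended, and finish with $\cut_\zz(\m{G}_X)\ge\tfrac12\sum_{i<j}w_{ij}$, which you prove by pairing each same-parity edge $\{i,j\}$ with the heavier opposite-parity edge $\{i,j-1\}$ — the same monotonicity step as the paper's~\eqref{eq:mono_odd_even}. The genuinely different ingredient, and it is the technical heart, is your proof of Lemma~\ref{lemma:w_is_psd}. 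The paper conjugates $\mathbf{W}$ by a bidiagonal ``discrete derivative'' matrix, uses the monotonicity and additivity lemmas plus a case analysis on which weights vanish to show $\mathbf{Q}=\mathbf{A}^\top\mathbf{W}\mathbf{A}$ has nonnegative diagonal, nonpositive off-diagonal entries and is diagonally dominant, and then invokes Gershgorin. Your layer-cake identity
\begin{align}
\mathbf{W}=(1-p)\int_0^\infty \ind_{I_t}\ind_{I_t}^\top\,dt,\qquad I_t=\Bigl\{k:\ t\le p_k<\tfrac{1-p}{p}\,t\Bigr\},
\end{align}
instead exhibits $\mathbf{W}$ directly as a nonnegative integral of rank-one Gram matrices of interval indicators; this is shorter, avoids the vanishing-edge case analysis, and makes conceptually explicit what the paper's monotonicity and additivity properties are really encoding. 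Your per-pair derivation of $\guess_A(X)=\guess(X)-\cut_A(\m{G}_X)$ from the symmetric formula $\guess(X)=1+\sum_{\{i,j\}}\min(p_i,p_j)$ is likewise a cleaner route to Theorem~\ref{thrm:graph_formulation} than the paper's bookkeeping of posterior orders, though it is less informative about how individual ranks move. The caveats you flag are real but routine: at $p=0$ one reads $\tfrac{1-p}{p}t$ as $+\infty$ so $I_t$ is a prefix, at $p=\tfrac12$ the sets $I_t$ are empty and everything is zero, the diagonal identity $(1-p)\int\ind(i\in I_t)\,dt=(1-2p)p_i$ must be checked separately, and ties do not affect either side of the identities.
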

	
	We conjecture the following. 
	\begin{conjecture}\label{conj}
		$\guess_{\zz}(X) =  \guess_{\opt}(X)$ for any discrete r.v. $X$ and any $p\in[0,1]$. 
	\end{conjecture}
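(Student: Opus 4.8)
The plan is to recast the optimization as a weighted maximum-cut problem and then solve that problem exactly via a threshold (``layer-cake'') decomposition of the edge weights; throughout, $(x)^{+}\dfn\max\{x,0\}$.

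\emph{Step 1: reduction to maximum cut.} A direct computation gives the identity $\guess(R)=1+\sum_{k<l}\min\{R_k,R_l\}$ for every probability vector $R$. Conditioned on the noisy answer $Y_A=y$, the unnormalized posterior mass of symbol $k$ equals $(1-p)p_k$ when $y$ agrees with $\ind(k\in A)$ and $p\,p_k$ otherwise; call it $w^{(y)}_k$. Since $\guess(X\mid Y_A)=\sum_y P(Y_A=y)\,\guess(P_{X\mid Y_A}(\cdot\mid y))$ and $\min$ is positively homogeneous, this yields $\guess_A(X)=1+\sum_{k<l}\sum_{y\in\{0,1\}}\min\{w^{(y)}_k,w^{(y)}_l\}$. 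Expanding pairwise, using $p_k\ge p_l$ for $k<l$ and $p\le 1-p$: a pair on the same side of $A$ contributes $p_l$, a pair separated by $A$ contributes $p\,p_l+\min\{p\,p_k,(1-p)p_l\}$. Collecting terms and using $\guess(X)=\sum_k k\,p_k$ gives
\[
\guess_A(X)=\guess(X)-\cut_c(A),\qquad \cut_c(A)\dfn\sum_{\substack{k<l\\A\text{ separates }k,l}} c_{kl},\quad c_{kl}\dfn\bigl((1-p)p_l-p\,p_k\bigr)^{+}.
\]
So Conjecture~\ref{conj} is equivalent to: the zigzag partition $A_{\zz}$ maximizes $\cut_c(A)$ over all $A\subseteq\{1,\dots,N\}$.

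\emph{Step 2: threshold decomposition (the crux).} Encode a partition by $s\in\{\pm1\}^{N}$, $s_k=1\iff k\in A$, so that $\cut_c(A)=\tfrac12\sum_{k<l}c_{kl}(1-s_ks_l)$ and it suffices to minimize $Q(s)\dfn\sum_{k<l}c_{kl}s_ks_l$. Writing $c_{kl}=\int_0^\infty\ind\bigl(p\,p_k<t\le(1-p)p_l\bigr)\,dt$ and interchanging with the finite sum,
\[
Q(s)=\int_0^\infty\Bigl(\sum_{k<l}\ind(p\,p_k<t)\,\ind((1-p)p_l\ge t)\,s_ks_l\Bigr)dt .
\]
Since the $p_k$ are non-increasing, for each $t$ the set $K_t\dfn\{k:p\,p_k<t\}$ is a suffix and $L_t\dfn\{l:(1-p)p_l\ge t\}$ is a prefix of $\{1,\dots,N\}$. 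The key point: if $k<l$ with $k\in K_t$ and $l\in L_t$, then $l\in K_t$ (suffix, $l>k$) and $k\in L_t$ (prefix, $k<l$), so the inner sum equals $\sum_{k<l,\;k,l\in B_t}s_ks_l$ over the \emph{contiguous block} $B_t\dfn K_t\cap L_t$. Using $\sum_{k<l\in B}s_ks_l=\tfrac12\bigl(\bigl(\sum_{k\in B}s_k\bigr)^2-|B|\bigr)$ together with $\bigl(\sum_{k\in B}s_k\bigr)^2\ge |B|\bmod 2$ (the sum has the parity of $|B|$), each integrand is bounded below, uniformly over $s$, by $\tfrac12\bigl((|B_t|\bmod 2)-|B_t|\bigr)$.

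\emph{Step 3: zigzag is extremal for every layer simultaneously.} For $s^{\zz}_k=(-1)^k$ and any contiguous block $B=\{a,\dots,b\}$, the alternating signs telescope to $\sum_{k\in B}(-1)^k\in\{0,\pm1\}$, hence $\bigl(\sum_{k\in B}s^{\zz}_k\bigr)^2=|B|\bmod 2$; so $s^{\zz}$ meets the pointwise lower bound of Step 2 for \emph{every} $t$ at once. Therefore $Q(s^{\zz})=\int_0^\infty\tfrac12\bigl((|B_t|\bmod 2)-|B_t|\bigr)dt\le Q(s)$ for all $s$, i.e.\ $A_{\zz}$ is a maximum-weight cut, which by Step 1 is exactly $\guess_{\zz}(X)=\guess_{\opt}(X)$.

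\emph{Where the difficulty lies.} Step 1 is bookkeeping and Step 3 is a one-line parity fact, so the argument stands or falls with the contiguity claim in Step 2: although the weight $c_{kl}=((1-p)p_l-p\,p_k)^{+}$ treats an edge's two endpoints asymmetrically, for every threshold $t$ the ``active'' pairs are exactly the edges of a clique supported on a single contiguous interval, and the zigzag labeling balances every interval. I expect the bulk of the work to be making the suffix/prefix bookkeeping airtight --- including the degenerate cases $K_t=\emptyset$, $L_t=\emptyset$, or $K_t\cap L_t=\emptyset$, where the inner sum vanishes for all $s$ --- and carefully verifying the pairwise expansion of Step 1; once those are nailed down, the conjecture follows.
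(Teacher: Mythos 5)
Your argument checks out, and it is worth being explicit about what that means: the paper does not prove this statement --- it is left as a conjecture, and Theorem~\ref{thrm:zz_is_almost_optimal} only establishes optimality of the zigzag up to the additive slack $\tfrac{1}{4}(1-2p)$. Your Step 1 simply re-derives Theorem~\ref{thrm:graph_formulation} by a cleaner route (your $c_{kl}$ coincides with the paper's $w_{k,l}$ for $k<l$), via the identity $\guess(R)=1+\sum_{k<l}\min\{R_k,R_l\}$; I verified the pairwise expansion. The genuine departure is in how the resulting max-cut problem is solved. The paper relaxes the $\{\pm1\}$ quadratic program to the reals and shows $\mathbf{x}^T\mathbf{W}\mathbf{x}\ge 0$ via a congruence to a diagonally dominant matrix (Lemma~\ref{lemma:w_is_psd}); replacing the true integer minimum of the quadratic form by $0$ is precisely what costs the $\tfrac{1}{4}(1-2p)$. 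You instead compute the integer minimum exactly: the layer-cake representation $c_{kl}=\int_0^\infty\ind\bigl(p\,p_k<t\le(1-p)p_l\bigr)\,dt$ turns the quadratic form into an integral of clique forms supported on the contiguous blocks $B_t=\{m: p\,p_m<t\le(1-p)p_m\}$ (your suffix/prefix argument for contiguity and for the clique structure is sound, ties and empty blocks included), each layer is minimized over $\{\pm1\}^N$ by any assignment that balances $B_t$ up to parity, and the alternating assignment does this for every contiguous interval simultaneously --- that simultaneity is the entire proof, and it holds. As a byproduct, your decomposition gives $\mathbf{W}=\int_0^\infty\mathbf{1}_{B_t}\mathbf{1}_{B_t}^T\,dt$ under the paper's diagonal convention $w_{i,i}=(1-2p)p_i=\int_0^\infty\ind(i\in B_t)\,dt$, which is a one-line replacement for Lemma~\ref{lemma:w_is_psd} and also subsumes the monotonicity and additivity lemmas. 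I found no gap; the bookkeeping you flag in Steps 1 and 2 is routine once written out, and the argument settles the conjecture rather than merely the approximate version proved in the paper.
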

	
	\subsection{Related Work}
	The classical problem of determining the value of a discrete r.v. $X$ by asking general binary questions is well studied in information theory and source coding, going back to Shannon~\cite{shannon1948mathematical} and Huffman~\cite{huffman1952method}. As is well known, this problem leads to the notion of Shannon entropy $H(X)$ as the essential fundamental limit for the minimal number of questions required on average to describe a single copy of $X$, and as the exact number of questions per instance (with high probability) required to describe i.i.d. copies of $X$ in the limit of multiple instances. More recently, Massey~\cite{massey1994guessing} introduced a different notion of r.v. complexity, corresponding to the minimal number of guesses required on average in order to determine the value of $X$, referred to here as $G(X)$. Massey related $G(X)$ to $H(X)$ by deriving a lower bound showing that the expected guessing time of $X$ grows at least exponentially with its Shannon entropy: 
	\begin{align}\label{eq:massey_lb}
		\guess(X) \geq 2^{H(X)}/4+1. 
	\end{align}
	This bound it tight within a $(4/e)$ multiplicative factor when $X$ is geometrically distributed. In a follow-up work, Arikan~\cite{arikan1996inequality} defined the notion of conditional guessing, and provided general lower and upper bounds for the $\rho$-th moment of the conditional guessing time of $X$ given $Y$, relating them to $H_{\frac{1}{1+\rho}}(X\mid Y)$, the Arimoto-R\'{e}nyi conditional entropy of order $\frac{1}{1+\rho}$. Arikan's upper bound without the conditioning was later tightened by Bozta{\c{s}}~\cite{boztas1997comments} for integer moments. In particular, when evaluated for a bivariate i.i.d. sequence $\{(X_k,Y_k)\}_{k=1}^n\stackrel{\mathrm{i.i.d}}{\sim} P_{XY}$ and $\rho=1$, Arikan's bounds imply that 
	\begin{align}\label{eq:arikan_asymp}
		\lim_{n\to\infty}\frac{1}{n}\log{\guess(X^n\mid Y^n)} = H_{1/2}(X\mid Y), 
	\end{align}
	with a similar result for general $\rho$. Continuing his previous work on the cutoff rate of single-user sequential decoding~\cite{arikan1988upper}, Arikan used the conditional guessing moment bounds to determine the cutoff rate of sequential decoding in multiple-access channels.  
	 
	Arikan's work has been extended in many directions. Arikan and Merhav~\cite{arikan1998guessing} considered the case of guessing a possibly continuous r.v, where Bob's guess is considered correct if it is close enough to the true value w.r.t some distortion measure. They derived a single letter variational expression for the  exponent of the guessing moment as a function of the distortion level. The same authors then extended the discussion to a joint source-channel coding setup with a guessing decoder~\cite{arikan1998joint}, and to the wiretap channel setting with a guessing wiretapper~\cite{merhav1999shannon}. Arikan and Bozta{\c{s}} considered a one-sided lying variation of the guessing game~\cite{arikan2002guessing}, where Alice lies with some probability when she rejects Bob's guesses, but never lies when he guesses correctly. Sundaresan~\cite{sundaresan2007guessing} studied the case of universal guessing, where the underlying distribution $P_X$ is only known to belong to some family of distributions, and determined the associated penalty (redundancy) in the guessing exponent incurred by this uncertainty. When specialized to the case of an i.i.d. distribution with an unknown marginal, his general results indicate that the redundancy term vanishes asymptotically, a fact that was already observed by Arikan and Merhav~\cite{arikan1998guessing}. Massey's guessing game has inspired a myriad of other works, tackling various other guessing setups and relations between guessing moments and entropy, see e.g. ~\cite{pfister2004renyi,malone2004guesswork,yamamoto2011channel,hanawal2011guessing,sason2018improved}, as well as discussing the implications and applications of guessing in cryptographic settings, see e.g. ~\cite{arikan2008guessing,malone2012investigating,christiansen2013guessing,bracher2015guessing,yona2017effect}, among many others.  

	The guessing game considered in this paper allows Bob, the guesser, to ask a single general binary question to which he obtains a possibly incorrect answer, before proceeding with symbol-by-symbol guessing. The first phase of our setup is thus reminiscent of another game, known as the {\em R\'{e}nyi-Ulam game}~\cite{renyi1961problem,ulam1978adventures}. In this game, Bob is allowed to ask Carole multiple general binary questions (adaptively), to which he obtains possibly incorrect answers, and his goal is to identify $X$. The classical version of this game is adversarial: The number of questions Bob can ask as well as the maximum number of lies Carole can tell are given, and Bob needs to find $X$ with certainty (so there is no need to assume a distribution on $X$, only the cardinality $N$ matters). The problem is to determine, for a given set of parameters, whether Bob has a strategy to always win the game, see~\cite{pelc2002searching} for a comprehensive survey. In his PhD thesis, Berlekamp~\cite{berlekamp1964block} studied the properties of winnable games from the equivalent perspective of error correction with noiseless feedback. Specifically, he provided bounds on the asymptotic version of this problem, where the cardinality of $X$ grows exponentially as $N=2^{nR}$ and the maximum number of lies grows linearly as $np$, and where $n$ is the total number of questions. Berlekamp's bounds together with a result by Zigangirov~\cite{zigangirov1976number} provide a complete characterization of the relation between $p$ and $R$, unlike the case where Bob needs to decide on his questions in advance, which is equivalent to the problem of finding the maximum growth rate of a binary error correcting code with minimum distance that scales linearly with the block length, a notorious open problem in coding theory.
	
	In our setup the lies are random; the version of the R\'{e}nyi-Ulam game in which Carole lies with probability $p$ and Bob needs to determine $X$ with high probability given Carole's answers, can essentially be thought of as the standard channel coding with noiseless feedback over a binary symmetric channel with crossover probability $p$~\cite{horstein1963sequential,shayevitz2011optimal}. Going back to our guessing game, if we allow Bob to ask Carole multiple questions before he starts guessing $X$, then our setup can in fact be viewed as channel coding with noiseless feedback where instead of a small error probability we are interested in a small expected guessing time at the decoder. This problem is closely related to that of the {\em cutoff rate} of the binary symmetric channel with feedback~\cite{gallager1968information,arikan1988upper}, which we briefly discuss later in this paper.  
	
	\subsection{Organization}
	In Section~\ref{Section: Noiseless Case} we analyze the simple noiseless case and show that zigzag (as well as some other partitions) is exactly optimal. Section~\ref{Section: Noisy Case} is devoted to the proof of our main result, showing that zigzag is almost optimal in the noisy case. In Section~\ref{sec:additional} we provide two additional results: We show that most non-zigzag partitions that are noiseless-optimal become strictly suboptimal with noise, and also that repeated zigzag partitions achieve the cutoff rate of the binary symmetric channel with feedback.   
		
\section{Noiseless Case}
	\label{Section: Noiseless Case}
	In this section we discuss the special case where Carole answers truthfully, i.e., $p=0$. In this simple setting, if Bob asks a question ``is $X\in A$?'', his posterior distribution is simply $P_X$ restricted to either $A$ or $\overline{A}$ according to Carole's answer, hence his optimal guessing strategy is going over the symbols in the relevant set in a descending order. We fully characterize the family of optimal questions in this simple case, and specifically show that the zigzag partition is optimal. We provide an explicit expression for $\guess_{\opt}(X)$. 

Below we assume whenever convenient and without loss of generality that $N$ is even (if this is not the case, we can append a zero probability symbol). We call $A$ a \textit{C-partition} if for any $i\in\{1,\ldots N/2\}$ it holds that either $2i-1\in A$ and $2i\in \overline{A}$, or $2i-1\in \overline{A}$ and $2i\in A$. In other words, when the symbols are divided into pairs in a descending order, the members of each pair are on the opposite side of the partition. When there are symbols with equal probabilities, we also call $A$ a C-partition if it can be trivially transformed into one by swapping between such symbols. Note in particular that the zigzag partition is a C-partition. 

We prove the following. 
	
	\begin{proposition}
		\label{Proposition: Noiseless Case}
		Let $p=0$. Then for any discrete r.v. $X$ of cardinality $N$,  
		\begin{equation}
		\guess_{\opt}(X) = \frac{1}{2}\left(\guess(X) + \sum_{i=1}^{\lceil N/2\rceil }p_{2i-1}\right), 
		\end{equation}
        which implies that 
		\begin{equation}\label{eq:g_opt_p=0}
		\frac{\guess(X)}{2} + \frac{1}{4}\leq \guess_{\opt}(X) \leq  \frac{\guess(X)}{2} + \frac{1}{2}.
		\end{equation}
                Moreover, $\guess_A(X) = \guess_{\opt}(X)$ if and only if $A$ is a C-partition. In particular, the zigzag partition is optimal. 
	\end{proposition}
	
\begin{proof}
The reason for the optimality of a C-partition is quite intuitive. Following Carole's answer, Bob's first guess will be correct if $X$ is equal to the maximal probability symbol on either side of the partition. To maximize this probability, the symbols $1$ and $2$ better be on opposite sides of the partition. Similarly, Bob's second guess will be correct if $X$ is equal to the second largest symbol on either side of the partition, hence he should place symbols $3$ and $4$ on opposite sides as well. Continuing this argument, it can be observed that we are getting a C-partition, and that it is an optimal one. 

To make this precise, let $A$ be some partition and assume without loss of generality that $|A|\geq N/2$. Recall that Carole's answer in this noiseless setting is  $Y_A = \ind(X\in A)$. The minimal expected guessing time associated with $A$ is  
\begin{align}
\guess_A(X) &= \Expt{\left(\ord_{X|Y_A}(X\mid Y_A)\right)} \\ 
 & =  \sum_{i=1}^Np_i \cdot \ord_{X|Y_A}(x_i\mid \ind(x_i\in A))\\ 
& = \sum_{k=1}^{|A|} k \left(\sum_{i\in A}p_i\ind(\ord_{X|Y_A}(x_i\mid 1)=k) + \sum_{i\in \overline{A}}p_i\ind(\ord_{X|Y_A}(x_i\mid 0)=k)\right)\label{eq:ordA} \\ 
& \geq \sum_{k=1}^{N/2} k (p_{2k-1}+p_{2k}) \label{eq:ordB}\\
& = \frac{1}{2}\sum_{k=1}^{N/2} \left((2k-1)p_{2k-1}+2kp_{2k} + p_{2k-1}\right)\\
& = \frac{1}{2}\left(\guess(X) + \sum_{k=1}^{N/2}p_{2k-1}\right).
\end{align}

In~\eqref{eq:ordA} we rearrange the summation and count on orders instead of symbols. The inequality~\eqref{eq:ordB} holds by virtue of the fact that each $k$ is multiplied by the sum of a distinct pair of symbol probabilities; the best possible assignment is to associate lower $k$ values with as much probability mass as possible. We now observe that the inequality~\eqref{eq:ordB} is tight if and only if $A$ is a C-partition, completing the proof. 

\end{proof}

\begin{example}[\emph{Multiple questions}]\label{ex:guessing_ent}
Suppose that before starting to guess $X$, Bob can ask Carole multiple binary questions. How many questions does he need in order for his guessing time to reduce to $1+\delta$, for some $0\leq \delta \leq \guess(X)-1$? And what is the best strategy? Asking $k$ questions (either sequentially depending on previous answers, or in a batch, since all answers are correct) is equivalent to partitioning the alphabet into $2^k$ disjoint subsets and asking Carole to point out the correct one. Following similar steps as done above, the best such partition is obtained by $2^k$-ary zigzag, i.e., putting symbol $i$ in subset $(i-1) \mod 2^k$. Moreover, this can be achieved sequentially by using simple zigzag queries $k$ times (adaptively). Let $\guess_{\opt}^{(k)}(X)$ be Bob's minimal expected guessing time after asking $k$ questions. Appealing to~\eqref{eq:g_opt_p=0}, we have that
\begin{align}
2^{-k}(\guess(X) -1/2) + 1/2 \leq  \guess_{\opt}^{(k)}(X) \leq 2^{-k}(\guess(X) -1) +1, 
\end{align}
and hence the number of required questions satisfies
\begin{align}\label{eq:noiseless_questions}
\log\left(\frac{\guess(X)-1/2}{\delta+1/2}\right)\leq   k\leq \log\left(\frac{\guess(X)-1}{\delta}\right).
\end{align}
We therefore conclude that Bob optimally requires $\log\guess(X)$ questions in order to reduce his expected guessing time to a constant $1+\delta$, up to an $O(\log(1/\delta))$ additive factor independent of $P_X$. This can be compared with the fact that when asking general binary questions, $H(X)$ questions are necessary and $H(X)+1$ questions are sufficient on average in order to completely learn $X$, e.g. using Huffman coding. Note that Massey's lower bound~\eqref{eq:massey_lb} indicates that  $\log\guess(X) \geq H(X)-2$, which is essentially tight for geometric distributions. In general however, the number of questions required to reduce the guessing time to a constant can be much larger than the entropy. For $X^n\stackrel{\mathrm{i.i.d}}{\sim} P_X$, Arikan's asymptotic result~\eqref{eq:arikan_asymp} in conjunction with~\eqref{eq:noiseless_questions} shows that $H_{1/2}(X)$ is roughly the number of guesses per instance that Bob optimally requires in order to reduce his guessing time of $X^n$ to a constant. This in general is strictly larger than $H(X)$, which is asymptotically the number of general binary questions per instance that are required in order to determine $X^n$ with probability approaching one. In Subsection~\ref{subsec:cutoff}, we generalize this discussion and determine the number of questions required to reduce the expected guessing time to sub-exponential in the noisy answers case, relating the result to the cutoff rate of the binary symmetric channel with feedback. 
\end{example}

\section{Noisy Case}
\label{Section: Noisy Case}
In this section we turn to the main focus of the paper, namely the case where Carole lies with probability $p$. We assume throughout and without loss of generality that $p_1\geq p_2\geq\cdots\geq p_N>0$, and that $0\leq p\leq 1/2$. Whereas in the noiseless case Carole's response simply eliminated all the symbols on the wrong side of the partition, in the noisy case Bob needs to carefully calculate the posterior distribution of $X$ given Carole's answer before guessing in the posterior descending order. This posterior distribution generally involves all the symbols, and can ``interlace'' symbols from opposite sides of the partition when ordered in descending order. We would like to find the optimal partition, namely the one such that the posterior order will result (on average w.r.t. Carole's lies) in the least expected guessing time.

To get a sense as to why this problem is nontrivial, consider the following simpler problem. Assume Bob is limited to choosing sets $A$ of cardinality $|A|=1$, i.e., to ask a question of the form "Is $X$ the $k$th member?". What is the optimal question in this case? As the following example demonstrates, for any positive integers $N$, $k\leq N$ and any $0<p< \tfrac{1}{2}$, there exists an r.v. $X$ of cardinality $N$ for which "is $X$ the $k$th member?" is (strictly) the optimal question to ask. 

\begin{example}
	\label{exmpl:simple_problem_any_member_can_be_best_query}
	Fix any cardinality $N\in \mathbb{N}$. Let  $0< p < \tfrac{1}{2}$ and $k\in\{1,2,...,N-1\}$. Pick some numbers $\alpha,\beta$ such that $0<\alpha<p<\beta<\tfrac{1}{2}$. Define the sequence $\{q_i\}_{i=1}^N$ as follows. Set $q_1=1$, and for $i>1$ set 
        \begin{align}
          q_i = \left\{\begin{array}{cr}\frac{\alpha}{1-\alpha}\,q_{i-1} & i\neq k+1 \\ \frac{\beta}{1-\beta}\,q_{i-1} & i=k+1\end{array}\right.. 
        \end{align}
Now, consider an r.v. $X$ with distribution $p_i\dfn q_i/\sum_jq_j$. By construction, if Bob asks about any symbol other than $k$, then the posterior distribution given any answer by Carole has the exact same order as the prior, and hence the minimal expected guessing time remains the same. However, if Bob asks about $k$, then in case Carole says ``yes'' then $k$ strictly moves up at least one spot in the posterior order; hence, the minimal expected guessing time is reduced. This example highlights the fact that the most important feature of good partitions is not so much the exact symbol probabilities, but rather which symbol can ``pass'' which in the posterior probability order. 
\end{example}

Before proceeding, we note that it is sometimes technically convenient to consider setups in which all the distributions under consideration do not have symbols with equal probabilities. Precisely, we say that $P_X$ is {\em non-degenerate} (for a fixed $p$) if for any partition $A\subseteq \m{X}$ and answer $y\in\{0,1\}$, the symbol probabilities of the associated conditional distribution $P_{X|Y_A}(\cdot\mid y)$  are all distinct. Note that this includes in particular $P_X$ itself, which can obtained by choosing the empty partition $A=\emptyset$. The following lemma shows that the set of non-degenerate distributions is dense in the probability simplex. The proof is relegated to the appendix. 
\begin{lemma}\label{lem:non-degenerate}
	For any $P_X$ and any $\varepsilon > 0$, there exists a non-degenerate $Q_X$ such that $\|P_X - Q_X\|_1 < \varepsilon$. 
\end{lemma}
As a direct consequence, we have the following corollary. 
\begin{corollary}\label{cor:non-degenerate}
	Let $g$ be a continuous real-valued function over the $N$-dimensional probability simplex equipped with the $L_1$ metric.  Then $g(P_X)$ can be approximated arbitrarily well by non-degenerate distributions.   
\end{corollary}
As we shall see, the functions considered in the proofs below are all continuous in $P_X$, and hence assuming non-degeneracy will incur no loss of generality. 

\subsection{Graph Theoretic Formulation}
In this subsection, we reformulate our problem in graph theoretic terms, by defining a weighted graph whose maximal cut is equal to the maximal possible reduction in expected guessing time. Specifically, let $\m{G}_X$ be a simple undirected weighted graph with vertex set $\m{X}=\{1,\ldots,N\}$, where the weight of the edge connecting any two vertices $i\neq j$ is given by 
\begin{align}\label{eq:weights_def}
w_{i,j}= \big{|}(1-p)\min\{p_i,p_j\} - p\max\{p_i,p_j\}\big{|}_+, 
\end{align}
where $|a|_+ \dfn \max\{0,a\}$. We will typically assume that $i<j$ and hence that $w_{ij} = |(1-p)p_j-pp_i|_+$. 

The cut weight associated with any partition $A\subseteq \m{X}$, i.e., the total weight of edges connecting $A$ and $\bar{A}$, is denoted by  
\begin{align}
\cut_A(\m{G}_X) \dfn \sum_{i\in A, j\not\in A} w_{i,j}, 
\end{align}
and the maximal cut weight is 
\begin{align}
\maxcut(\m{G}_X) \dfn \max_{A\subseteq \m{X}}\cut_A(\m{G}_X).
\end{align}
We have the following result. 
\begin{theorem}
	\label{thrm:graph_formulation}
	For any discrete r.v. $X$ and any $0\leq p\leq \tfrac{1}{2}$, 
	\begin{align}
	\guess_A(X) = \guess(X) - \cut_A(\m{G}_X).
	\end{align}
	Specifically, 
	\begin{align}
	\guess_\opt(X) = \guess(X) - \maxcut(\m{G}_X).
	\end{align}
\end{theorem}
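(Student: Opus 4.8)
The plan is to compute $\guess_A(X)$ directly from the posterior laws of $X$ given Carole's noisy answer, to observe that both $\guess(X)$ and $\guess_A(X)$ are \emph{pairwise} (quadratic) functionals of $P_X$, and then to match them symbol-pair by symbol-pair; this is exactly what produces the edge weights $w_{i,j}$ and exhibits the reduction in guesswork as a cut weight. Once $\guess_A(X)=\guess(X)-\cut_A(\m{G}_X)$ is established, the identity $\guess_\opt(X)=\guess(X)-\maxcut(\m{G}_X)$ follows immediately by minimizing over $A$.

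\emph{Step 1 --- a pairwise formula for guesswork.} For any pmf whose values sorted in descending order are $q_1\ge q_2\ge\cdots\ge q_n$ (so that the order function assigns rank $m$ to $q_m$), a one-line rearrangement of $\sum_m(m-1)q_m$ gives
\[
\sum_{m} m\,q_m \;=\; 1+\sum_{m<\ell}q_\ell \;=\; 1+\sum_{\{m,\ell\}}\min(q_m,q_\ell),
\]
where the last sum is over unordered pairs and uses $q_\ell\le q_m$ for $m<\ell$. This expression does not reference the ordering, so ties among probabilities require no special care. In particular $\guess(X)=1+\sum_{\{x,x'\}}\min(p_x,p_{x'})$, and I would apply the same identity to each posterior of $X$ given Carole's answer.

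\emph{Step 2 --- posteriors and a pairwise formula for $\guess_A$.} Set $w^{(1)}_x\dfn(1-p)\ind(x\in A)+p\,\ind(x\notin A)$ and $w^{(0)}_x\dfn 1-w^{(1)}_x$. Bayes' rule gives $P_{X\mid Y_A}(x\mid y)=w^{(y)}_x p_x/Z_y$ with $Z_y\dfn P(Y_A=y)=\sum_x w^{(y)}_x p_x$, and $Z_0+Z_1=1$ since $w^{(0)}_x+w^{(1)}_x=1$. Applying Step~1 to the posterior given $Y_A=y$ and multiplying by $Z_y$ gives $Z_y\,\guess(X\mid Y_A=y)=Z_y+\sum_{\{x,x'\}}\min(w^{(y)}_x p_x,\,w^{(y)}_{x'}p_{x'})$; summing over $y\in\{0,1\}$,
\[
\guess_A(X)\;=\;1+\sum_{\{x,x'\}}\Big[\min\!\big(w^{(1)}_x p_x,\,w^{(1)}_{x'}p_{x'}\big)+\min\!\big(w^{(0)}_x p_x,\,w^{(0)}_{x'}p_{x'}\big)\Big].
\]

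\emph{Step 3 --- per-pair matching, and the main obstacle.} Since $\guess(X)-\cut_A(\m{G}_X)=1+\sum_{\{x,x'\}}\big[\min(p_x,p_{x'})-\ind(x,x'\text{ on opposite sides of }A)\,w_{x,x'}\big]$, it remains to check, for every pair $i<j$ (so $p_i\ge p_j$), the identity
\[
\min\!\big(w^{(1)}_i p_i,\,w^{(1)}_j p_j\big)+\min\!\big(w^{(0)}_i p_i,\,w^{(0)}_j p_j\big)\;=\;\min(p_i,p_j)-\ind(i,j\text{ opposite})\,w_{i,j}.
\]
If $i,j$ lie on the same side of $A$, the two minima are $(1-p)p_j$ and $p\,p_j$, summing to $p_j=\min(p_i,p_j)$, matching the right-hand side (indicator $0$). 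If they lie on opposite sides, say $i\in A$ and $j\notin A$ (the reverse case is identical, as $w_{i,j}$ is symmetric), then one minimum equals $p\,p_j$ because $(1-p)p_i\ge(1-p)p_j\ge p\,p_j$, while the other is $\min(p\,p_i,(1-p)p_j)$, equal to $p\,p_i$ when $(1-p)p_j\ge p\,p_i$ and to $(1-p)p_j$ otherwise --- precisely the two branches of $w_{i,j}=|(1-p)p_j-p\,p_i|_+$ --- and in both branches the total is $p_j-w_{i,j}$. Summing the verified per-pair identity over all pairs proves $\guess_A(X)=\guess(X)-\cut_A(\m{G}_X)$, and minimizing over $A$ gives the second display. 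There is no deep obstacle here: the crux is recognizing that guesswork is quadratic in the distribution, which is what makes the max-cut picture available at all. The step most error-prone in practice is the sign/orientation bookkeeping in the case analysis above (keeping $(1-p)$ versus $p$ on the correct side of the partition), but it is entirely elementary.
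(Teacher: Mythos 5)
Your proof is correct, but it takes a genuinely different route from the paper's. The paper works at the level of individual symbols: it writes out the posterior $P_{X\mid Y_A}(\cdot\mid y)$, conditions on whether Carole lied, and for each symbol counts how many symbols on the opposite side of the partition it passes (or is passed by) in the posterior order; summing these signed displacements and reordering the double sum recovers $\cut_A(\m{G}_X)$. Because this argument manipulates order functions directly, the paper must restrict to ``general configurations'' (all prior and posterior probabilities distinct) and dispose of ties by a perturbation/continuity argument. Your route instead rests on the tie-insensitive identity $\guess(X)=1+\sum_{\{x,x'\}}\min(p_x,p_{x'})$, applies it to the prior and to each posterior, and verifies a per-pair identity whose two branches reproduce exactly the definition $w_{i,j}=|(1-p)p_j-pp_i|_+$. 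What your approach buys is a cleaner derivation: no general-position assumption, no perturbation step, and the edge weights emerge mechanically from the case analysis rather than from counting order swaps. What the paper's approach buys is the combinatorial interpretation that drives the rest of the exposition --- the weight $w_{i,j}$ is explicitly the expected contribution of symbol $j$ ``passing'' symbol $i$ (or failing to) in the posterior order, which is the intuition invoked in Example~\ref{exmpl:simple_problem_any_member_can_be_best_query} and in the discussion of the monotonicity and additivity properties. Both arguments are elementary and of comparable length; yours could serve as a drop-in replacement that slightly simplifies the treatment of ties.
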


Before we prove the theorem, a few remarks are in order. 

\begin{remark}
	Note that for $i<j$, the vertices $i$ and $j$ are not connected in $\m{G}_X$ (i.e., $w_{i,j} = 0$) if and only if $\frac{p_i}{p_j} \geq \frac{1-p}{p}$. In other words, $i<j$ are not connected if and only if their respective probabilities $p_i > p_j$ are sufficiently far so that regardless of the question asked and the answer received, they will never pass each other in the posterior order.
\end{remark}

\begin{remark}[\emph{almost noiseless case}]
	In the noiseless case $p=0$, the graph $\m{G}_X$ is fully connected and the weights are given by $w_{i,j} = \min\{p_i,p_j\}$. It can be verified that the maximal cut is indeed attained by the zigzag partition, as expected.  More generally, if $p$ is small enough to render the graph fully connected (all positive weights), then again it can be readily shown that the zigzag partition achieves the maximal cut, and is hence optimal. For this to happen, it should hold that $\frac{1-p}{p} > \tfrac{\max p_i}{\min p_i} = \tfrac{p_1}{p_N}$.
\end{remark}

\begin{remark}[\emph{very noisy case}]
	On the other extreme, if $p$ is sufficiently close to half then the graph becomes empty. In this case $\maxcut(\m{G}_X)=0$  and is achieved by any partition. In other words, when the answer is too noisy there may be no gain to be reaped by asking the question. Specifically, this happens when $\frac{1-p}{p} \leq \displaystyle{\min_i\tfrac{p_{i}}{p_{i+1}}}$, which corresponds to the case where the prior and posterior orders are always the same for any possible partition. 
\end{remark}

\begin{proof}[Proof of Theorem~\ref{thrm:graph_formulation}]
	Fix any partition $A$, and let $Y_A = \ind(X\in A)\oplus V$ be Carole's noisy answer to Bob's question. First, let us consider by how much the order of a symbol $i\in\m{X}$ changes after Carole answers the question. Applying Bayes law, the posterior distribution of $X$ given $Y_A$ is given by 
	\begin{align}
	P_{X|Y_A}(i\mid y) &= \frac{P_{Y_A|X}(y\mid i)\cdot P_X(i)}{P_{Y_A}(y)}\\
	&=\label{eq:ord_posterior} \frac{1}{\frac{1}{2}(1+(1-2y)(1-2p)(1-2p_A))} \cdot\left\{\begin{array}{lr} (1-p)p_i & y = \ind(i\in A) 
	\\ pp_i & y = \ind(i\not\in A)\end{array}\right.,
	\end{align}
	where $p_A\dfn \sum_{j\in A}p_j$. 		
	
	For the purpose of technical convenience and to avoid edge cases, we assume below that $P_X$ is non-degenerate, which means that $\{p_i\}$ are all distinct, and that $(1-p)\min\{p_i,p_j\} \neq p\max\{p_i,p_j\}$ for all pairs $i\neq j$. This incurs no loss of generality: Examining~\eqref{eq:ord_posterior}, we see that $P_{X|Y_A}(\cdot \mid y)$ is a continuous function of $P_X$ for any fixed choice of $A, y$ and $p$. Since by definition $G_A(X)$ is linear in $P_{X|Y_A}(\cdot \mid y)$, it is continuous in $P_X$ as well . Similarly, the edge weights of $\m{G}_X$ are continuous in $P_X$, thus so is $\cut_A(\m{G}_X)$ for any fixed $A$, and so is $\maxcut(\m{G}_X)$ being the maximum of a finite number of continuous functions. Hence, in light of Corollary~\ref{cor:non-degenerate}, $\guess(X), \guess_A(X), \guess_\opt(X), \cut_A(\m{G}_X), \maxcut(\m{G}_X)$ can all be approximated arbitrarily well by non-degenerate distributions.   
	
	We now note that each symbol retains its order w.r.t. all other members on the same side of the partition, i.e., for any $i,j\in A$ it holds that $\ord_X(i) > \ord_X(j)$ implies $\ord_{X|Y_A}(i\mid y) > \ord_{X|Y_A}(j \mid y)$ (and similarly for $i,j\in\bar{A}$). However, the order may not be preserved between symbols on the opposite side of the partition, and a symbol $i\in A$ may pass or be passed by members of $\bar{A}$. Precisely: Let $A(i)$ be equal to $A$ if $i\in A$ and to $\bar{A}$ otherwise. Write $i\sim j$ if $w_{i,j}>0$. Then eq.~\eqref{eq:ord_posterior} implies that
        \begin{align}          
		&\ord_X(i) - \ord_{X|Y_A}(i\mid \ind(i\in A)) \\
                &= \left|\left\{j\not\in A(i) :  p_j > p_i,  (1-p)p_i-pp_j > 0\right\}\right| \\
		\label{eq:ord1}& = \left|\left\{j\not\in A(i) :  p_j>p_i, j\sim i\right\}\right|,	 
	\end{align}
        and 
        \begin{align}          
	&\ord_X(i) - \ord_{X|Y_A}(i\mid \ind(i\not\in A)) \\
        &= -\left|\left\{j\not\in A(i) : p_j < p_i, (1-p)p_j - pp_i > 0\right\}\right| \\ 
	\label{eq:ord2}& = -\left|\left\{j\not\in A(i) :  p_j<p_i, j\sim i\right\}\right|, 
        \end{align}          
where we can use strict inequalities between the probabilities due to non-degeneracy. Now, recall that $V=Y_A\oplus \ind(X\in A) \sim \text{Ber}(p)$ is independent of $X$. Then
        \begin{align}
	&\guess(X) - \guess_A(X) 
	\\&= \Expt\left(\ord_X(X) - \ord_{X|Y_A}(X\mid Y_A)\right)
	\\&= \Expt_{V}\Expt_{X,Y_A\mid V}\left(\ord_X(X) - \ord_{X|Y_A}(X\mid Y_A) \mid V\right) 
	\\&= (1-p)\,\Expt\left(\ord_X(X) - \ord_{X|Y_A}(X\mid \ind(X\in A))\right) 
	\\&\quad + p\,\Expt\left(\ord_X(X) - \ord_{X|Y_A}(X\mid \ind(X\not\in A))\right) 
	\\\label{eq:use_ord12}&= (1-p)\sum_i p_i \cdot \left|\left\{j\not\in A(i) :  p_j>p_i, j\sim i\right\}\right|
	\\&\quad - p\sum_i p_i\cdot \left|\left\{j\not\in A(i) :  p_j<p_i, j\sim i\right\}\right|
	\\&= (1-p)\sum_i \sum_{\substack{j\not\in A(i)\\p_j>p_i\\j\sim i}} p_i - p\sum_i \sum_{\substack{j\not\in A(i)\\p_j<p_i\\j\sim i}} p_i
	\\&= \sum_i \sum_{\substack{j\not\in A(i)\\p_j>p_i\\j\sim i}} (1-p)\min\{p_i,p_j\} - \sum_i \sum_{\substack{j\not\in A(i)\\p_j<p_i\\j\sim i}} p\max\{p_i,p_j\}
	\\\label{eq:w_ord}&= \sum_{i\in A} \sum_{j\not\in A} w_{i,j}
	\\& = \cut_A(\m{G}_X).
      \end{align}
We have used~\eqref{eq:ord1} and ~\eqref{eq:ord2} in~\eqref{eq:use_ord12}, and~\eqref{eq:w_ord} follows from the weights definition by reordering the summations. 
\end{proof}

\subsection{Properties of $\m{G}_X$}
\label{Section: Noisy Case : Subsection: Properties of GX}
In this subsection we point out two basic properties of the graph $\m{G}_X$, \emph{monotonicity} and \emph{additivity}, which prove useful in the sequel. To that end, it is instructive to think of an edge $(i,j)$ as an interval $[i,j]$ (which is identified with $[j,i]$ if $i>j$). With this in mind, we say that two edges intersect or contain each other whenever their intervals do. For edges that intersect, we can define their union and intersection to be the edges that correspond to the union and intersection of their intervals, respectively. 

\begin{lemma}[Monotonicity]
	\label{Lemma: Monotonicity}
	The weight of an edge is monotonically non-increasing w.r.t. containment, i.e., if $[i,j]\subseteq [k,l]$ then $w_{i,j}\geq w_{k,\ell}$. Specifically, if an edge is disconnected (i.e., has zero weight) then so are all the edges that contain it. 
\end{lemma}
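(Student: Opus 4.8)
\textbf{Proof plan for Lemma~\ref{Lemma: Monotonicity}.}
The plan is to reduce the statement to a termwise comparison of the two affine expressions inside the $|\cdot|_+$, exploiting only the ordering of the probabilities and the fact that $t\mapsto|t|_+$ is non-decreasing. First I would fix representatives so that $i<j$ and $k<\ell$ (the intervals $[i,j]$ and $[j,i]$ are identified, so there is no loss). Then the containment $[i,j]\subseteq[k,\ell]$ is exactly the statement $k\leq i<j\leq\ell$. Since we are working under the standing convention $p_1\geq p_2\geq\cdots\geq p_N$, this gives the chain $p_k\geq p_i\geq p_j\geq p_\ell\geq 0$, and the weight formula for $i<j$ simplifies to $w_{i,j}=\bigl|(1-p)p_j-pp_i\bigr|_+$, and likewise $w_{k,\ell}=\bigl|(1-p)p_\ell-pp_k\bigr|_+$.

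The core step is the inequality $(1-p)p_j-pp_i\geq(1-p)p_\ell-pp_k$, which I would obtain by adding the two elementary bounds $(1-p)p_j\geq(1-p)p_\ell$ (valid because $1-p\geq 0$ and $p_j\geq p_\ell$) and $-pp_i\geq -pp_k$ (valid because $p\geq 0$ and $p_i\leq p_k$). Applying the monotone non-decreasing map $t\mapsto|t|_+$ to both sides then yields $w_{i,j}\geq w_{k,\ell}$, which is the claim. For the ``specifically'' clause, if $w_{i,j}=0$ then $0=w_{i,j}\geq w_{k,\ell}\geq 0$ forces $w_{k,\ell}=0$, so every edge containing a disconnected edge is itself disconnected.

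I do not expect any real obstacle here; the only things to be careful about are the orientation bookkeeping for the intervals (handled by the WLOG reduction above) and the degenerate overlaps $i=k$ or $j=\ell$, which are permitted by the argument since nothing above requires the inequalities $k\leq i$ and $j\leq\ell$ to be strict. The argument uses only $0\leq p\leq\tfrac12$ through $1-p\geq 0$ and $p\geq 0$, so it holds uniformly in $p$.
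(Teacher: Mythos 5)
Your proof is correct and takes the same route as the paper; the paper simply compresses the argument into a single line ("By definition we immediately have that $w_{i,j}=\lvert(1-p)p_j-pp_i\rvert_+\geq\lvert(1-p)p_\ell-pp_k\rvert_+$"), while you spell out the termwise comparison of the affine expressions under the $\lvert\cdot\rvert_+$ and the monotonicity of $t\mapsto\lvert t\rvert_+$ that justifies it. (Incidentally, the paper's one-line proof has $i$ and $j$ transposed inside the first $\lvert\cdot\rvert_+$ relative to the convention stated after the weight definition; your version is the consistent one.)
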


\begin{proof}
By definition~\eqref{eq:weights_def} we immediately have that $w_{i,j} = |(1-p)p_i-pp_j|_+ \geq  |(1-p)p_k-pp_\ell|_+$. 
\end{proof}


		
		
\begin{lemma}[Additivity]
The sum of weights of two intersecting edges is equal to the sum of weights of their union and intersection, provided that all four edge weights are nonzero. Namely, for any $i,k\leq j,\ell$ with $w_{i,j},w_{k,l},w_{i,l},w_{k,j}>0$,  
	\label{Lemma: Additive property of weights}
	\begin{equation}
	w_{i,j}+w_{k,l}=w_{i,l}+w_{k,j}.
	\end{equation}
\end{lemma}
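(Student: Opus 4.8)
The plan is to observe that the positivity hypothesis makes the truncation $|\cdot|_+$ in~\eqref{eq:weights_def} inactive on all four edges at once, so that each weight becomes an affine-linear function of its two endpoint probabilities; once this is in place the identity is a one-line rearrangement.

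Concretely, first I would record the relevant form of the weights. Since $p_1\geq p_2\geq\cdots\geq p_N$, for any indices $a\leq b$ we have $\min\{p_a,p_b\}=p_b$ and $\max\{p_a,p_b\}=p_a$, so $w_{a,b}=\big|(1-p)p_b-pp_a\big|_+$, and in particular $w_{a,b}>0$ forces $w_{a,b}=(1-p)p_b-pp_a$. The hypothesis $i,k\leq j,\ell$ says exactly that each of the four pairs $(i,j)$, $(k,\ell)$, $(i,\ell)$, $(k,j)$ is already written with its smaller index first, so the assumption $w_{i,j},w_{k,l},w_{i,l},w_{k,j}>0$ gives $w_{i,j}=(1-p)p_j-pp_i$, $w_{k,\ell}=(1-p)p_\ell-pp_k$, $w_{i,\ell}=(1-p)p_\ell-pp_i$, and $w_{k,j}=(1-p)p_j-pp_k$. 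Adding,
\[
w_{i,j}+w_{k,\ell}=(1-p)(p_j+p_\ell)-p(p_i+p_k)=w_{i,\ell}+w_{k,j},
\]
which is the assertion.

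There is essentially no obstacle here; the only point requiring care is that the truncation be inactive on all four edges simultaneously, and that is precisely what the positivity hypothesis provides — without it the identity genuinely fails, since a negative argument would be clamped to $0$ while the others stay linear. I would also note, although it is not needed for the statement, that under the intersection structure this hypothesis is mild: the edge corresponding to the union of the two intervals contains both $[i,j]$ and $[k,\ell]$, each of which in turn contains the intersection edge, so by Lemma~\ref{Lemma: Monotonicity} positivity of the union weight alone already forces all four weights to be positive. Thus in applications one only has to verify that the largest of the four edges is connected.
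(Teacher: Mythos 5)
Your proof is correct and follows essentially the same route as the paper's: the positivity hypothesis deactivates the $|\cdot|_+$ truncation on all four edges, each weight becomes linear in its endpoint probabilities, and the identity is a rearrangement of terms. Your closing observation that, by Lemma~\ref{Lemma: Monotonicity}, positivity of the union edge alone already implies positivity of all four weights is a correct and useful aside not made explicit in the paper.
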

\begin{proof}
By definition~\eqref{eq:weights_def} and the assumption of positive weights, we have
\begin{align}
w_{i,j}+w_{k,l}&=(1-p)p_i-pp_j
\\&\quad+(1-p)p_k-pp_l
\\&=(1-p)p_i-pp_l
\\&\quad+(1-p)p_k-pp_j
\\&= w_{i,l}+w_{k,j}.
\end{align}
\end{proof}

\subsection{A Weak Bound via a Greedy Algorithm}
We are now in a position to prove a weak version of the zigzag optimality. It is well known and easy to check that a cut chosen uniformly at random has an expected cut weight equal to at least half the maximal cut weight. More interestingly, it is also known that this procedure can be derandomized via the conditional probabilities technique~\cite{alon2004probabilistic}, to show that a greedy algorithm attains at least this average performance, i.e., at least half the maximal. It turns out that we can set up a greedy algorithm that always converges to the zigzag partition, and hence show that zigzag achieves at least half of the maximal possible guessing time reduction. 
\begin{theorem}[zigzag is greedy]\label{thrm:zz_weak_bound}
	The greedy max-cut algorithm applied to  $\m{G}_X$ by adding the vertices in descending order of probability, yields the zigzag partition (possibly with suitable tie-breaking). Moreover, 
        \begin{align}\label{eq:weak_zz_bound}
          \cut_\zz(\m{G}_X) \geq \tfrac{1}{2}\maxcut(\m{G}_X).
        \end{align}
\end{theorem}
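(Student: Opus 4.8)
The plan is to run the greedy max-cut algorithm on $\m{G}_X$, adding the vertices in the order $1,2,\ldots,N$ (which is a descending order of probability since $p_1\geq\cdots\geq p_N$), and to maintain the invariant that after vertex $k$ has been placed, every odd-indexed vertex among $1,\ldots,k$ lies in $A$ and every even-indexed one in $\bar A$. When vertex $k$ is added, the greedy rule places it on whichever side yields the larger increase in cut weight; writing $W_A(k)\dfn\sum_{j<k,\,j\in A}w_{j,k}$ and $W_{\bar A}(k)\dfn\sum_{j<k,\,j\notin A}w_{j,k}$, this means $k$ goes to the side \emph{opposite} the heavier of $W_A(k),W_{\bar A}(k)$ (ties broken toward the zigzag side, which also absorbs any equal-weight edges). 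So, granting the invariant for $1,\ldots,k-1$, it suffices to prove
\[
\sum_{j<k:\ j-k\ \mathrm{odd}} w_{j,k}\ \geq\ \sum_{j<k:\ j-k\ \mathrm{even}} w_{j,k},
\]
since this places $k$ on the side matching its parity and extends the invariant; the base cases $k=1,2$ are immediate.

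To prove the displayed inequality I would use only Monotonicity (Lemma~\ref{Lemma: Monotonicity}), in the form that among edges incident to $k$ a nearer endpoint carries at least as much weight: for $j_1<j_2<k$ the interval $[j_2,k]$ is contained in $[j_1,k]$, so $w_{j_1,k}\leq w_{j_2,k}$. Now pair the already-placed vertices $\{1,\ldots,k-1\}$ into consecutive pairs: if $k$ is odd, into $\{1,2\},\{3,4\},\ldots,\{k-2,k-1\}$; if $k$ is even, into $\{2,3\},\{4,5\},\ldots,\{k-2,k-1\}$ with the single vertex $1$ left over. In every pair the two vertices have opposite parities, and the one whose parity differs from that of $k$ is the \emph{nearer} of the two to $k$; the leftover vertex (when $k$ is even) also has parity different from that of $k$. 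Hence, summing the inequality $w_{\text{farther},k}\leq w_{\text{nearer},k}$ over all pairs and adding the nonnegative leftover term on the larger side, we obtain exactly the displayed bound.

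For the quantitative statement~\eqref{eq:weak_zz_bound} I would invoke the conditional-expectation derandomization of the random cut~\cite{alon2004probabilistic}: at step $k$ the greedy increment to the cut weight is $\max\{W_A(k),W_{\bar A}(k)\}\geq\tfrac12\bigl(W_A(k)+W_{\bar A}(k)\bigr)=\tfrac12\sum_{j<k}w_{j,k}$, so summing over $k=2,\ldots,N$ the final cut has weight at least $\tfrac12\sum_{i<j}w_{i,j}\geq\tfrac12\maxcut(\m{G}_X)$. Since the greedy run above produces the zigzag partition, and since a tie only makes an increment equal rather than smaller and so does not weaken the bound, we conclude $\cut_\zz(\m{G}_X)\geq\tfrac12\maxcut(\m{G}_X)$.

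I expect the only genuine obstacle to be the bookkeeping in the parity/pairing argument — handling the odd and even cases separately, the leftover vertex, and the endpoints $1$ and $k-1$ — together with phrasing the tie-breaking convention so that the greedy output is unambiguously the zigzag partition while leaving the half-of-maxcut guarantee intact. Everything else is a direct application of Monotonicity and the standard greedy max-cut estimate.
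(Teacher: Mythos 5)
Your proof is correct and takes essentially the same route as the paper: you compare the two placements of the newly added vertex by pairing the previously placed vertices into consecutive parity pairs and invoking Monotonicity termwise, which is exactly the pairing $w_{k+2-2j,k+1}-w_{k+1-2j,k+1}\geq 0$ in the paper's inductive step, and you derive the half-of-maxcut bound from the standard greedy/derandomization estimate just as the paper does. The only cosmetic difference is that the paper phrases the argument as an induction on $|X|$ using the fact that the induced subgraph on $\{1,\dots,k\}$ equals the graph of the renormalized prefix distribution, whereas you state it directly as an invariant of the greedy run.
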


\begin{proof}
We prove by induction on the cardinality $N$ of $X$. For $N=2$, zigzag is trivially the greedy (and also optimal) solution. Assume that for cardinality $N=k$, descending order greedy yields zigzag, and let $X$ be any r.v. with cardinality $N=k+1$. We can see that the induced subgraph of $\m{G}_X$ corresponding to the vertices $\{1,\ldots,k\}$ has the same weights as the graph associated with the distribution $\{p_i/\sum_{j=1}^kp_j\}_{i=1}^k$, up to the normalizing scaling factor. Thus, by our induction assumption, descending order greedy applied to $\{p_1,\ldots,p_k\}$ yields the zigzag partition over $\{1,\ldots,k\}$. We are therefore left with the assignment of the least likely symbol $p_{k+1}$, which can either continue the zigzag pattern or break it. Let us look at the difference between the weight added to the cut by the first option and that of the second option: 
\begin{align}
\sum_{j=1}^{\left\lfloor (k+1)/2\right\rfloor}&w_{k+2-2j,k+1} - \sum_{j=1}^{\left\lfloor k/2\right\rfloor}w_{k+1-2j,k+1}\\
  &=\sum_{j=1}^{\left\lfloor (k+1)/2\right\rfloor}(w_{k+2-2j,k+1} - w_{k+1-2j,k+1}) \\&\geq 0, \label{eq:zz_takes_half}
\end{align}
where weights are set to zero whenever the indices go out-of-bounds. The inequality above follows directly from the monotonicity and nonnegativity properties of the weights. Thus, continuing the zigzag pattern is always at least as good as breaking it, establishing the inductive step. 

The inequality~\eqref{eq:weak_zz_bound} now holds due to the aforementioned general result~\cite{alon2004probabilistic} indicating that greedy achieves at least half the maximal cut. But in fact, in our setup this can be proved directly by the same induction on the cardinality $N$. Inequality ~\eqref{eq:weak_zz_bound} trivially holds for $N=2$. Assuming it holds for $N=k$ and using the same rationale as above, we can see from~\eqref{eq:zz_takes_half} that adding $p_{k+1}$ in the zigzag position adds at least half of the total weight of all edges connected to vertex $k+1$ into the cut, thereby establishing the inductive step. 
\end{proof}

Combining Theorem~\ref{thrm:graph_formulation} and Theorem~\ref{thrm:zz_weak_bound}, we immediately obtain the following. 
\begin{corollary}
	For any discrete r.v. $X$ and any $p$, it holds that 
	\begin{align}
	\guess_{\zz}(X) \leq \tfrac{1}{2}\left(\guess(X) + \guess_{\opt}(X)\right).
	\end{align}
\end{corollary}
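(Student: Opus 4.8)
The plan is to derive the corollary as an immediate consequence of the two preceding theorems, with no new ideas required. First I would invoke Theorem~\ref{thrm:graph_formulation} applied to the zigzag partition, which gives the exact identity $\guess_\zz(X) = \guess(X) - \cut_\zz(\m{G}_X)$, as well as its second part, $\maxcut(\m{G}_X) = \guess(X) - \guess_\opt(X)$. Next I would invoke the inequality $\cut_\zz(\m{G}_X) \geq \tfrac12\maxcut(\m{G}_X)$ from Theorem~\ref{thrm:zz_weak_bound}, which quantifies the fact that the zigzag query removes at least half of the maximal possible cut weight, i.e.\ realizes at least half of the largest achievable guessing-time reduction.

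Chaining these gives $\guess_\zz(X) = \guess(X) - \cut_\zz(\m{G}_X) \leq \guess(X) - \tfrac12\maxcut(\m{G}_X)$, and substituting $\maxcut(\m{G}_X) = \guess(X) - \guess_\opt(X)$ and collecting terms yields $\guess_\zz(X) \leq \guess(X) - \tfrac12\bigl(\guess(X) - \guess_\opt(X)\bigr) = \tfrac12\bigl(\guess(X) + \guess_\opt(X)\bigr)$, which is precisely the claim. This is the entire argument; there is essentially no obstacle beyond correctly threading the identities and the one inequality.

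The only subtlety worth flagging is that the identities in Theorem~\ref{thrm:graph_formulation} are established under the general-configuration assumption (distinct prior probabilities, and $(1-p)\min\{p_i,p_j\}\neq p\max\{p_i,p_j\}$ for all pairs). To cover arbitrary $P_X$, one appeals to the same perturbation remark used there: $\guess(X)$, $\guess_\zz(X)$ and $\guess_\opt(X)$ are all continuous in $P_X$ under total variation, and a general-configuration r.v.\ $\widetilde X$ can be taken arbitrarily close to $X$, so the inequality passes to the limit. With that remark in place, the corollary follows in full generality.
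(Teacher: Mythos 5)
Your proof is correct and takes exactly the route the paper intends: combine the identity from Theorem~\ref{thrm:graph_formulation} with the bound $\cut_\zz(\m{G}_X)\geq\tfrac12\maxcut(\m{G}_X)$ from Theorem~\ref{thrm:zz_weak_bound} and rearrange. The paper states this as an immediate consequence without spelling out the algebra; your version simply makes the chain explicit, and the added note about passing from general configurations to arbitrary $P_X$ by continuity is a reasonable, if minor, extra precaution.
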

We have thus shown that the zigzag partition is at worst half-way between the best and the worst partitions in terms of guessing time. This weak bound can be loose by an additive factor that is linear in $\guess(X)$, e.g., in the noiseless case. Note that we have not used the additivity property when proving this bound; only monotonicity has been utilized. Next, we prove Theorem~\ref{thrm:zz_is_almost_optimal} which shows that the zigzag partition is optimal up to a small additive constant, that is independent of the distribution and cardinality of $X$. 

\subsection{Proof of Theorem \ref{thrm:zz_is_almost_optimal} (main result)}
To prove our main result, we represent the associated max-cut problem as an integer quadratic programming problem, and then relax the integer assumption to optimize over the reals. This is a standard approach in combinatorial optimization \cite{schrijver2002combinatorial}. Once this is done, it will remain to show that the resulting matrix is positive semidefinite, and to bound the relaxation loss. While the latter is a simple exercise, proving positive semidefiniteness requires a specialized manipulation tailored to our graph, that makes use of both its monotonicity and additivity properties. 

For any partition $A$, let us define the partition assignment vector $\mathbf{x}\in\{1,-1\}^n$ such that $x_i=\ind(i\in A)-\ind(i\in\bar{A})$, i.e. the $i$-th coordinate is set to $1$ if the member $i$ is in the set $A$, and to $-1$ otherwise. Using this notation, the associated cut weight can be expressed as 
\begin{align}\label{eq:cut_as_quad_form}
\cut_A(\m{G}_X)=\frac{1}{4}\sum_{i,j}\left(1-x_ix_j\right)w_{i,j},
\end{align}
where $w_{i,j}$ are the weights of the edges in $\m{G}_X$, given in~\eqref{eq:weights_def}. Note that although $w_{i,i}$ is not defined in the original context (there are no self loops), these weights can be chosen arbitrarily as they do not change the value of the cut in~\eqref{eq:cut_as_quad_form}. Thus, for the purpose of the optimization to follow, we find it convenient to naturally define these diagonal weights to satisfy~\eqref{eq:weights_def} as well, i.e., we set $w_{i,i} = (1-2p)p_i$. Also, for brevity of exposition, we naturally set out-of-bounds weights to zero, i.e., $w_{i,j}=0$ whenever either $i$ or $j$ are not in the range $1,\ldots,N$. Practicing some algebra, we obtain
\begin{align}\label{eq:SD_relaxation}
\cut_A(\m{G}_X)=\frac{1}{4}\left(\sum_{i,j}w_{i,j}\right)-\frac{1}{4}\mathbf{x^TWx},
\end{align}
where the symmetric matrix $\mathbf{W}$ has entries $w_{i,j}= \big{|}(1-p)\min\{p_i,p_j\} - p\max\{p_i,p_j\}\big{|}_+$ for all $i,j$. 

We now claim the following. 
\begin{lemma}\label{lemma:w_is_psd}
	The matrix $\mathbf{W}$ is positive semidefinite. 
\end{lemma}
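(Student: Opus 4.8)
The plan is to write $\mathbf{W}$ as a nonnegative combination of rank-one positive semidefinite matrices, so that positive semidefiniteness is immediate. For a contiguous index interval $[a,b]\subseteq\{1,\ldots,N\}$ let $\mathbf{u}_{[a,b]}\in\{0,1\}^{N}$ be its indicator vector; then $\mathbf{u}_{[a,b]}\mathbf{u}_{[a,b]}^{\top}$ is the all-ones matrix supported on the principal block indexed by $[a,b]$, and it is PSD. I will show that
\[
\mathbf{W}=\sum_{1\le a\le b\le N}c_{[a,b]}\,\mathbf{u}_{[a,b]}\mathbf{u}_{[a,b]}^{\top},\qquad c_{[a,b]}\ge 0 ,
\]
which proves the lemma. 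Comparing the $(i,j)$ entries forces $\sum_{a\le\min\{i,j\},\,b\ge\max\{i,j\}}c_{[a,b]}=w_{i,j}$, so the coefficients must be the discrete mixed second difference of the weight function: setting $g(i,j)\dfn w_{i,j}=\lvert(1-p)p_{j}-pp_{i}\rvert_{+}$ for $1\le i\le j\le N$ and extending by $g(0,\cdot)\dfn 0$, $g(\cdot,N+1)\dfn 0$, I will take
\[
c_{[a,b]}\dfn g(a,b)-g(a-1,b)-g(a,b+1)+g(a-1,b+1),\qquad 1\le a\le b\le N .
\]

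First I would check, by a one-line telescoping sum, that $\sum_{a\le i,\,b\ge j}c_{[a,b]}=g(i,j)=w_{i,j}$ for all $i\le j$ (the diagonal entries $w_{i,i}=(1-2p)p_{i}$ fit the same formula), so the displayed identity holds. The substance is the nonnegativity of the coefficients. For this I would introduce the \emph{unclipped}, separable function $h(a,b)\dfn(1-p)p_{b}-pp_{a}=\phi(b)-\psi(a)$ with $\phi(b)=(1-p)p_{b}$ and $\psi(a)=pp_{a}$, so that $g=\lvert h\rvert_{+}$ on the relevant range. From the ordering $p_{1}\ge\cdots\ge p_{N}$ two structural facts about the four values $h(a,b),h(a-1,b),h(a,b+1),h(a-1,b+1)$ follow: by monotonicity (Lemma~\ref{Lemma: Monotonicity}) $h(a,b)$ is the largest of the four and $h(a-1,b+1)$ the smallest; and by additivity (Lemma~\ref{Lemma: Additive property of weights}, which here holds unconditionally because we are working with the unclipped $h$) $h(a,b)+h(a-1,b+1)=h(a-1,b)+h(a,b+1)$. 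Hence $\{h(a-1,b),h(a,b+1)\}$ is obtained from $\{h(a,b),h(a-1,b+1)\}$ by pulling both values inward to their common mean, and since $x\mapsto\lvert x\rvert_{+}$ is convex this cannot increase the sum of the clipped values, i.e.\ $g(a,b)+g(a-1,b+1)\ge g(a-1,b)+g(a,b+1)$, which is exactly $c_{[a,b]}\ge 0$. The boundary cases $a=1$ and $b=N$ are even easier, reducing to monotonicity of $g$ in a single argument.

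The main obstacle is the nonnegativity of the $c_{[a,b]}$; everything else (the telescoping identity, the rank-one decomposition, the boundary bookkeeping) is routine. Conceptually, the right way to see the lemma is that the interval/containment structure of $\m{G}_X$ — which is precisely what the monotonicity and additivity lemmas encode — forces the natural interval-indicator decomposition of $\mathbf{W}$ to have nonnegative weights; the small convexity observation is needed only to absorb the clipping $\lvert\cdot\rvert_{+}$.
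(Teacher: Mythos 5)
Your proof is correct, and it takes a genuinely different route from the paper's, although the two turn out to be closely related. The paper proves the lemma by congruence: it shows that $\mathbf{Q}=\mathbf{A}^T\mathbf{W}\mathbf{A}$, where $\mathbf{A}$ is the discrete-derivative matrix, is diagonally dominant with nonnegative diagonal and non-positive off-diagonal entries, and then invokes Gershgorin's disks; the non-positivity of the off-diagonal entries is established by a case analysis over which of the four participating weights have been clipped to zero by $|\cdot|_+$. You instead exhibit an explicit sum-of-squares certificate, $\mathbf{W}=\sum_{a\le b}c_{[a,b]}\,\mathbf{u}_{[a,b]}\mathbf{u}_{[a,b]}^{\top}$ with $c_{[a,b]}\ge 0$, which buys two things: it dispenses with Gershgorin entirely, and it replaces the case analysis with a single convexity/majorization observation — the unclipped weight is separable, so the union/intersection pair and the two crossing intervals have equal $h$-sums with the former pair being the extremes, and convexity of $|\cdot|_+$ then gives $c_{[a,b]}\ge 0$ in one line. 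The link between the two arguments is that $\mathbf{A}^T\mathbf{u}_{[a,b]}=\mathbf{e}_a-\mathbf{e}_{b+1}$, so under the paper's congruence your decomposition becomes the standard Laplacian-plus-diagonal decomposition of $\mathbf{Q}$, and your coefficient $c_{[a,b]}$ is exactly $-q_{a-1,b}$ (with the paper's boundary conventions); you are therefore proving the same key inequality $w_{i,j}+w_{i+1,j+1}\le w_{i,j+1}+w_{i+1,j}$, but by a cleaner mechanism. Your handling of the boundary intervals ($a=1$ or $b=N$, reducing to monotonicity of $g$ in one argument) and of the diagonal convention $w_{i,i}=(1-2p)p_i$ is consistent with the telescoping identity, so the decomposition reproduces $\mathbf{W}$ exactly.
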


Showing that $\mathbf{W}$ is positive semidefinite directly appears to be quite difficult, mainly due to the nonlinear nullifying operator $|\cdot|_+$, which renders the eigenvalues and the determinant of principal minors intractable. To circumvent this, we will perform a certain manipulation that makes use of the special monotonic/additive structure of $\mathbf{W}$. To that end, we first need a few simple lemmas. The proofs are relegated to the appendix. 

Recall that two square matrices $\mathbf{B}$ and $\mathbf{C}$ are called \emph{congruent} if there exists an invertible matrix $\mathbf{A}$ such that $\mathbf{C} = \mathbf{A}^T\mathbf{B}\mathbf{A}$. 
\begin{lemma}\label{lem:eqv_psd}
  Let $\mathbf{B},\mathbf{C}$ be two congruent matrices. Then $\mathbf{B}$ is positive semidefinite if and only if $\mathbf{C}$ is positive semidefinite. 
\end{lemma}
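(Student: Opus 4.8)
The final statement to prove is Lemma~\ref{lem:eqv_psd}: two congruent matrices are simultaneously positive semidefinite or not. This is a classical fact (Sylvester's law of inertia, in its simplest form), so the proof should be short and elementary.

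The plan is to work directly from the definition of positive semidefiniteness, namely that $\mathbf{M}$ is PSD iff $\mathbf{v}^T\mathbf{M}\mathbf{v}\geq 0$ for every vector $\mathbf{v}$. Suppose $\mathbf{C}=\mathbf{A}^T\mathbf{B}\mathbf{A}$ with $\mathbf{A}$ invertible, and assume $\mathbf{B}$ is PSD. Then for any vector $\mathbf{v}$, set $\mathbf{u}=\mathbf{A}\mathbf{v}$; we get $\mathbf{v}^T\mathbf{C}\mathbf{v}=\mathbf{v}^T\mathbf{A}^T\mathbf{B}\mathbf{A}\mathbf{v}=\mathbf{u}^T\mathbf{B}\mathbf{u}\geq 0$, so $\mathbf{C}$ is PSD. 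For the converse, note that congruence is symmetric: since $\mathbf{A}$ is invertible, $\mathbf{B}=(\mathbf{A}^{-1})^T\mathbf{C}\mathbf{A}^{-1}$, so $\mathbf{B}$ is congruent to $\mathbf{C}$ via the invertible matrix $\mathbf{A}^{-1}$, and the same argument applied in reverse shows that if $\mathbf{C}$ is PSD then so is $\mathbf{B}$. This closes the equivalence.

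There is essentially no obstacle here — the only thing to be slightly careful about is the direction of the substitution (one needs invertibility of $\mathbf{A}$ only for the converse direction, to guarantee that as $\mathbf{v}$ ranges over all vectors so does $\mathbf{u}=\mathbf{A}\mathbf{v}$, and equivalently to write the symmetric form of congruence). I would state both directions explicitly rather than just saying "by symmetry," since the symmetry of the congruence relation is exactly where invertibility enters. The whole argument is two or three lines of linear algebra and needs no special structure of $\mathbf{W}$; it is a stand-alone fact used later to reduce the PSD-ness of $\mathbf{W}$ to that of a more tractable congruent matrix.
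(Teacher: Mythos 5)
Your proof is correct and takes the same route as the paper's, which simply notes that $\mathbf{x}^T\mathbf{C}\mathbf{x}\geq 0$ if and only if $(\mathbf{A}\mathbf{x})^T\mathbf{B}(\mathbf{A}\mathbf{x})\geq 0$; you merely spell out both directions and the role of invertibility more explicitly.
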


\begin{lemma}[Gershgorin's Disks~\cite{bhatia2013matrix}]\label{lem:gdisks}
	Let $\mathbf{B}$ be a square complex matrix with entries $b_{i,j}$. Define $R_i=\sum_{j\neq i}|b_{i,j}|$, and let $D_i\subset \mathbb{C}$ be a disk of radius $R_i$ centered at $b_{i,i}$. Then each eigenvalue of $\mathbf{B}$ lies in at least one $D_i$.
\end{lemma}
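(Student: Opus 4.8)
\textbf{Proof proposal for Lemma~\ref{lem:gdisks} (Gershgorin's Disks).}
The plan is the standard eigenvector argument. Let $\lambda$ be an arbitrary eigenvalue of $\mathbf{B}$, and fix a corresponding nonzero eigenvector $\mathbf{v}=(v_1,\ldots,v_n)^T$. The first and only delicate step is the choice of index: I would pick $i$ so that $|v_i|=\max_k|v_k|$. Since $\mathbf{v}\neq 0$ this maximum is strictly positive, hence $v_i\neq 0$; this is exactly what makes the final division legitimate. (If one instead picked an arbitrary coordinate, it could vanish and the argument would collapse — so this is the point that needs care, and it is the closest thing to an obstacle here.)

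Next I would write out the $i$-th coordinate of the identity $\mathbf{B}\mathbf{v}=\lambda\mathbf{v}$, namely $\sum_j b_{i,j}v_j=\lambda v_i$, and move the diagonal term to the left to obtain
\begin{align}
(\lambda-b_{i,i})\,v_i=\sum_{j\neq i}b_{i,j}v_j .
\end{align}
Taking absolute values and applying the triangle inequality gives $|\lambda-b_{i,i}|\,|v_i|\leq\sum_{j\neq i}|b_{i,j}|\,|v_j|$. By the maximality of $|v_i|$ we have $|v_j|\leq|v_i|$ for every $j$, so the right-hand side is bounded by $\left(\sum_{j\neq i}|b_{i,j}|\right)|v_i|=R_i\,|v_i|$.

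Finally, dividing through by $|v_i|>0$ yields $|\lambda-b_{i,i}|\leq R_i$, i.e., $\lambda\in D_i$. Since $\lambda$ was an arbitrary eigenvalue of $\mathbf{B}$, every eigenvalue lies in at least one disk $D_i$, which is the claim. The whole proof is elementary; its entire content is the observation that for the coordinate of largest modulus, the off-diagonal entries in that row cannot push $\lambda$ farther from $b_{i,i}$ than their absolute sum.
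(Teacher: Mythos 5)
Your proof is correct and follows essentially the same route as the paper's: choose the coordinate of maximal modulus of the eigenvector, isolate the diagonal term in that row of $\mathbf{B}\mathbf{v}=\lambda\mathbf{v}$, and apply the triangle inequality together with $|v_j|\leq|v_i|$ before dividing by $|v_i|>0$. If anything, your version is slightly cleaner than the paper's, which writes an intermediate bound $\bigl|\sum_{j\neq i}b_{i,j}x_j/x_i\bigr|\leq\bigl|\sum_{j\neq i}b_{i,j}\bigr|$ that is not justified as stated; your direct use of the triangle inequality avoids that slip.
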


The matrix $\mathbf{B}$ is called \emph{diagonally dominant} if it is real-valued $|b_{i,i}| \geq R_i$, i.e., the  Gershgorin disks all lie in the right-hand-side of the complex plane. The following is an immediate consequence of Lemma~\ref{lem:gdisks}. 
\begin{corollary}\label{cor:diag_dom}
 Suppose $\mathbf{B}$ is a symmetric, real-valued, diagonally dominant matrix, with nonnegative diagonal entries. Then $\mathbf{B}$ is positive semidefinite. 
\end{corollary}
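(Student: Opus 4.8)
The plan is to read positive semidefiniteness off the spectrum, using Lemma~\ref{lem:gdisks} together with the two hypotheses. First I would recall the two standard facts about a real symmetric matrix $\mathbf{B}$: all of its eigenvalues are real, and $\mathbf{B}$ is positive semidefinite (i.e. $\mathbf{x}^T\mathbf{B}\mathbf{x}\ge 0$ for all $\mathbf{x}$) if and only if each of these eigenvalues is nonnegative. So it suffices to show that every eigenvalue $\lambda$ of $\mathbf{B}$ satisfies $\lambda\ge 0$.

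The key step is then to localize $\lambda$. By Lemma~\ref{lem:gdisks}, $\lambda$ lies in some Gershgorin disk $D_i$, that is $|\lambda-b_{i,i}|\le R_i$ with $R_i=\sum_{j\ne i}|b_{i,j}|$. Because $\lambda$ is real, this forces $\lambda\ge b_{i,i}-R_i$. Now the two hypotheses close the argument: diagonal dominance gives $R_i\le|b_{i,i}|$, and nonnegativity of the diagonal gives $|b_{i,i}|=b_{i,i}$, hence $b_{i,i}-R_i\ge 0$ and therefore $\lambda\ge 0$. Since $\lambda$ was an arbitrary eigenvalue, the spectrum of $\mathbf{B}$ is nonnegative and thus $\mathbf{B}$ is positive semidefinite.

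There is no substantial obstacle here — as the surrounding text notes, the claim is an immediate corollary of Gershgorin's theorem. The only subtlety worth flagging is where symmetry enters, since it is used in two distinct places: once to guarantee that the eigenvalues are real (so that ``$\lambda$ lies in a disk centered at the nonnegative number $b_{i,i}$ of radius $R_i\le b_{i,i}$'' genuinely forces $\lambda\ge 0$, rather than merely constraining its real part), and once to pass from ``all eigenvalues are nonnegative'' to the quadratic-form characterization of positive semidefiniteness. Dropping symmetry would only yield that every eigenvalue of $\mathbf{B}$ has nonnegative real part.
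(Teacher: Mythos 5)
Your proof is correct and is exactly the argument the paper has in mind: the paper states the corollary as an ``immediate consequence'' of Lemma~\ref{lem:gdisks} without writing out the details, and those details are precisely your chain $\lambda \ge b_{i,i} - R_i \ge 0$ for real eigenvalues of a symmetric matrix. Your remark about the two distinct roles of symmetry is a nice touch but does not change the substance.
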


\begin{proof}[Proof of Lemma~\ref{lemma:w_is_psd}] 
Our strategy is to find an invertible matrix $\mathbf{A}$ such that $\mathbf{Q} = \mathbf{A}^T\mathbf{W}\mathbf{A}$ is diagonally dominant, for any distribution $P_X$ and any value of $0\leq p \leq 1/2$. In light of Lemma~\ref{lem:eqv_psd} and Corollary~\ref{cor:diag_dom}, this will conclude our proof. 

Consider the following matrix $\mathbf{A}$ with entries $a_{i,j}=\delta_{i,j}-\delta_{i,j+1}$, where $\delta_{i,j}$ is the Kronecker delta:
\begin{align}
\mathbf{A}=\begin{bmatrix}
1 & 0 & 0 & 0 & \dots & 0 & 0 & 0 \\
-1 & 1 & 0 & 0 & \dots & 0 & 0 & 0 \\
0 & -1 & 1 & 0 & \dots & 0 & 0 & 0 \\
0 & 0 & -1 & 1 & \dots & 0 & 0 & 0 \\
\vdots & \vdots & \vdots & \vdots & \ddots & \vdots & \vdots & \vdots \\
0 & 0 & 0 & 0 & \dots & -1 & 1 & 0 \\
0 & 0 & 0 & 0 & \dots & 0 & -1 & 1 \\
\end{bmatrix}.
\end{align}
Note that the change of basis induced by this matrix essentially corresponds to taking a ``discrete derivative'' -- this in some sense implies that working with the derivative of the partition assignment vector makes a more useful representation. It can be seen that $\mathbf{A}$ is invertible. Let us compute the matrix $\mathbf{Q}=\mathbf{A^TWA}$ and show that it is diagonally dominant. Multiplying from the right, we obtain 
\begin{align}
u_{i,j} &\dfn \left(\mathbf{WA}\right)_{i,j}\\
&=w_{i,j}-w_{i,j+1},
\end{align}
where recall we define $w_{i,N+1}\dfn 0$. Now multiplying from the left, we obtain
\begin{align}
q_{i,j}&=\left(\mathbf{A^TWA}\right)_{i,j}\\
&=u_{i,j}-u_{i+1,j}
\\&=w_{i,j}+w_{i+1,j+1}-w_{i,j+1}-w_{i+1,j}. \label{eq:Qmat_entries}
\end{align}

Using~\eqref{eq:Qmat_entries} and the properties of the matrix $\mathbf{W}$, let us now proceed to show that $\mathbf{Q}$ has nonnegative diagonal entries, and non-positive off-diagonal entries. Monotonicity implies that $w_{i,i}+w_{i+1,i+1}\geq w_{i+1,i}+w_{i,i+1}=2w_{i,i+1}$, and therefore by~\eqref{eq:Qmat_entries} we have that $q_{i,i}\geq 0$. Now consider off-diagonal entries $q_{i,j}$ and assume without loss of generality that $i<j$. If all the edge weights in the expression~\eqref{eq:Qmat_entries} for  $q_{i,j}$ exist (i.e., for $p$ small enough), then by additivity $q_{i,j}=0$. As $p$ is increased, monotonicity implies that the first edge weight to become zero is $w_{i,j+1}$, at which point we have 
	\begin{align}
	q_{i,j}&=w_{i,j}+w_{i+1,j+1}-w_{i+1,j}\\
&=(1-p)p_{j+1}-pp_i\\
&\leq 0, 
	\end{align}
where the last inequality follows from $w_{i,j+1}=\left|(1-p)p_{j+1}-pp_i\right|_+$ and our assumption that $w_{i,j+1}=0$. When $p$ is further increased, the next edge weight to become zero is either $w_{i,j}$ or $w_{i+1,j+1}$. The following two edges to become zero are $w_{i,j},w_{i+1,j+1}$, which leaves us with $q_{i,j}$ that equals either $w_{i+1,j+1}-w_{i+1,j}$, or $w_{i,j}-w_{i+1,j}$, or $-w_{i+1,j}$, all of which are non-positive by monotonicity and nonnegativity of the weights. Finally, when $p$ is large enough, all the participating weights are zero and $q_{i,j}=0$.

We are now in a position to show that $\mathbf{Q}$ is diagonally dominant:
\begin{align}
q_{i,i}-\sum_{j=1,j\neq i}^{N}| q_{i,j} | &= \sum_{j=1}^{N}q_{i,j} \label{eq:using_signs}
\\& = \sum_{j=1}^{N}\left( w_{i,j}+w_{i+1,j+1}-w_{i,j+1}-w_{i+1,j} \right) \label{eq:expanding_q}
\\ &= w_{i,1}-w_{i+1,1}+w_{i+1,N}-w_{i,N}\\
&\geq 0, \label{eq:finalizing}
\end{align} 
where~\eqref{eq:using_signs} follows since the off-diagonal entries of $\mathbf{Q}$ are non-positive, ~\eqref{eq:expanding_q} holds by virtue of~\eqref{eq:Qmat_entries}, and~\eqref{eq:finalizing} follows from monotonicity. Combined with the fact that the diagonal entries of $\mathbf{Q}$ are non-negative and appealing to Corollary~\ref{cor:diag_dom}, we conclude that $\mathbf{Q}$ is positive semidefinite. 
\end{proof}

Returning to~\eqref{eq:SD_relaxation}, and noting that Lemma~\ref{lemma:w_is_psd} indicates that $\mathbf{x^TWx}\geq 0$ for any $\mathbf{x}$, we have that
\begin{align}\label{eq:maxcut_ub}
\maxcut(\m{G}_X)\leq\frac{1}{4}\sum_{i,j}w_{i,j}.
\end{align}
Let us relate this upper bound to $\cut_\zz(\m{G}_X)$. To that end, note that 
\begin{align}\label{eq:zz_formula}
\cut_\zz(\m{G}_X)=\sum_{i=1}^{N-1}\sum_{k=1}^{\left\lfloor N/2\right\rfloor}w_{i,i+2k-1}.
\end{align}
Observe also that by monotonicity,  $w_{i,i+2k-1}\geq w_{i,i+2k}$ and therefore 
\begin{align}\label{eq:mono_odd_even}
2w_{i,i+2k-1}\geq w_{i,i+2k-1} + w_{i,i+2k}
\end{align}
for any $k\geq 1$ (note that this bounds holds vacuously in case $2k>N$). Now we can write 
\begin{align}
\maxcut(\m{G}_X)&\leq\frac{1}{4}\sum_{i,j}w_{i,j} \\ 
&=\frac{1}{4}\sum_{i=1}^{N}w_{i,i}+\frac{1}{2}\sum_{i=1}^{N-1}\sum_{j=i+1}^{N}w_{i,j} \\ 
&=\frac{1}{4}\sum_{i=1}^{N}(1-2p)p_i+\frac{1}{2}\sum_{i=1}^{N-1}\sum_{k=1}^{\left\lfloor N/2\right\rfloor}(w_{i,i+2k-1} + w_{i,i+2k})\\ 
&\leq\frac{1}{4}(1-2p) +\frac{1}{2}\sum_{i=1}^{N-1}\sum_{k=1}^{\left\lfloor N/2\right\rfloor}2w_{i,i+2k-1}  \\ 
&=\frac{1}{4}(1-2p) + \cut_\zz(\m{G}_X),
\end{align}
where we have used~\eqref{eq:zz_formula},~\eqref{eq:mono_odd_even}, and our definition of the diagonal $w_{i,i}$. We have therefore shown that 
\begin{align}
\cut_\zz(\m{G}_X)&\geq \maxcut(\m{G}_X)-\frac{1}{4}(1-2p). 
\end{align}
Appealing to Theorem~\ref{thrm:graph_formulation}, the proof is concluded.

\section{Additional Results}\label{sec:additional}
\subsection{Suboptimality of General C-partitions}
Recall that $A$ is called a C-partition if the symbols $2k-1$ and $2k$ are on the opposite side of the partition, for any $k$. In the noiseless case, we saw that all the C-partitions (and zigzag among them) are optimal. We now show that in the noisy case, whereas zigzag is almost optimal, non-zigzag C-partitions are in general strictly suboptimal. 

Let us call $\{2k-1,2k\}$ a \emph{$C$-pair} if $2k-1\in A$ and $2k\in \bar{A}$, and a \emph{$\bar{C}$-pair} if $2k-1\in \bar{A}$ and $2k\in A$. We say that $A$ is \emph{zigzag-equivalent} if for any pair of $C$-pair and $\bar{C}$-pair, either all cross-pair edges exist, or none of them exist. It can be directly verified that for non-degenerate $P_X$, if $A$ is zigzag-equivalent then it has the exact same cut weight as the zigzag itself, for essentially the same reasons as in the noiseless case. However, if $A$ is not zigzag-equivalent, then it generally performs strictly worse. 
	\begin{theorem}
		\label{Lemma: ZZ is better than any C-partition}
		Let $A$ be a C-partition. Then
		\begin{align}
		\cut_{\zz}(\m{G}_X) \geq \cut_A(\m{G}_X), 
		\end{align}
                and hence 
                \begin{align}
                \guess_{\zz}(X) \leq \guess_A(X).  
                \end{align}               		
Furthermore, if $P_X$ is non-degenerate, the inequalities are strict if and only if $A$ is not zigzag-equivalent. 
	\end{theorem}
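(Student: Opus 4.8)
The plan is to compare $\cut_\zz(\m{G}_X)$ with $\cut_A(\m{G}_X)$ edge by edge, grouping the edges by the consecutive index-pairs $P_k\dfn\{2k-1,2k\}$ they touch. Every \emph{within-pair} edge $(2k-1,2k)$ crosses the cut of any $C$-partition --- this is the defining property of a $C$-partition --- so such edges contribute the same total weight $\sum_k w_{2k-1,2k}$ to $\cut_\zz$ and to $\cut_A$ and can be ignored. For the \emph{cross-pair} edges, fix $a<b$: between $P_a$ and $P_b$ there are four edges, which I will call the \emph{outer} edge $(2a-1,2b)$, the \emph{inner} edge $(2a,2b-1)$, and the two \emph{middle} edges $(2a-1,2b-1)$, $(2a,2b)$. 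Checking the four possible orientations of $(P_a,P_b)$ shows that in a $C$-partition the crossing cross-pair edges are the outer and the inner one when $P_a,P_b$ have the same orientation (both $C$-pairs or both $\bar C$-pairs), and the two middle ones when they have opposite orientation. In the zigzag partition every $P_k$ is a $C$-pair, so it always lands in the first case, and hence
\[
\cut_\zz(\m{G}_X)-\cut_A(\m{G}_X)=\sum_{\substack{a<b:\ P_a,\,P_b\ \text{have}\\ \text{opposite orientation in }A}}\Delta(a,b),\qquad \Delta(a,b)\dfn\bigl(w_{2a-1,2b}+w_{2a,2b-1}\bigr)-\bigl(w_{2a-1,2b-1}+w_{2a,2b}\bigr).
\]

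The heart of the matter is to prove $\Delta(a,b)\ge 0$. I would write $w_{r,s}=|\hat w_{r,s}|_+$ for $r<s$, with $\hat w_{r,s}\dfn(1-p)p_s-pp_r$ the ``raw'' weight before nullification (consistent with \eqref{eq:weights_def}), and note two elementary facts about the four raw values attached to $P_a,P_b$: (i) the additivity identity holds \emph{unconditionally} at the raw level, $\hat w_{2a-1,2b}+\hat w_{2a,2b-1}=\hat w_{2a-1,2b-1}+\hat w_{2a,2b}$, by a one-line cancellation; and (ii) by $p_{2a-1}\ge p_{2a}$ and $p_{2b-1}\ge p_{2b}$, the raw outer value is $\le$ each raw middle value, each of which is $\le$ the raw inner value (the same ordering underlies Monotonicity, Lemma~\ref{Lemma: Monotonicity}). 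Thus (raw outer, raw inner) majorizes (raw middle, raw middle) with the same sum, and since $t\mapsto|t|_+$ is convex, $|\hat w_{2a-1,2b}|_+ + |\hat w_{2a,2b-1}|_+\ \ge\ |\hat w_{2a-1,2b-1}|_+ + |\hat w_{2a,2b}|_+$, i.e. $\Delta(a,b)\ge 0$. Feeding this into the displayed identity gives $\cut_\zz(\m{G}_X)\ge\cut_A(\m{G}_X)$, and $\guess_\zz(X)\le\guess_A(X)$ then follows from Theorem~\ref{thrm:graph_formulation}.

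For the strictness statement in a general configuration I would track equality in the convexity step. Since the raw outer value is the smallest of the four and the raw inner value the largest, $\Delta(a,b)=0$ precisely when all four raw values are nonnegative (all four cross-pair edges exist) or all four are negative (none exist); in a general configuration no raw value is zero, so in every remaining case --- outer absent, inner present, i.e. some but not all cross-pair edges present --- one checks $\Delta(a,b)>0$ strictly (according to how many of the four vanish, $\Delta$ equals $-\hat w_{2a-1,2b}$, or a single strictly positive raw weight, or a difference of two distinct raw weights). This dichotomy for a pair consisting of one $C$-pair and one $\bar C$-pair is exactly the definition of zigzag-equivalence, so $\cut_\zz(\m{G}_X)>\cut_A(\m{G}_X)$ iff some opposite-orientation pair fails it, i.e. iff $A$ is not zigzag-equivalent.

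I expect the main obstacle to be exactly the nonnegativity of $\Delta(a,b)$ together with its equality case: one must reconcile additivity, which is naturally stated only for positive weights (Lemma~\ref{Lemma: Additive property of weights}), with monotonicity, which lets weights vanish. Passing to the raw weights $\hat w$ and invoking convexity of $|\cdot|_+$ is what makes both the inequality and its equality characterization drop out uniformly, avoiding a tedious case split on which of the four edges survive for a given $p$.
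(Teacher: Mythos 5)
Your proposal is correct, and while it uses the same decomposition as the paper --- within-pair edges are always in the cut for any $C$-partition, so the comparison reduces to the cross-pair edges of opposite-orientation pair couples, and one must show that (outer $+$ inner) dominates (middle $+$ middle) --- the way you establish the key inequality $\Delta(a,b)\ge 0$ is genuinely different from and cleaner than the paper's. The paper fixes a generic $C$-pair and $\bar C$-pair and runs a six-way case split on which of the four cross-pair edges have survived the nullification $|\cdot|_+$, invoking the Monotonicity and Additivity lemmas separately in different cases (Cases 1--6 in the paper). You instead pass to the raw weights $\hat w_{r,s}=(1-p)p_s-pp_r$, observe that additivity holds unconditionally at the raw level and that the raw outer/inner pair majorizes the raw middle pair, and then conclude $\Delta(a,b)\ge 0$ in a single stroke from convexity of $|\cdot|_+$. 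This removes the case analysis entirely, and the equality characterization (all four present or all four absent, since outer is smallest and inner is largest) then follows from the same majorization picture, again avoiding the paper's case-by-case verification of strictness. What the paper's approach buys in exchange is that it reuses the already-stated Monotonicity and Additivity lemmas as black boxes, whereas your argument requires peeling back the $|\cdot|_+$ to the raw level; but the payoff is a more transparent proof with an immediate equality characterization.
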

\begin{proof}
We transform all $\bar{C}$-pairs in $A$ into $C$ pairs, thereby transforming the C-partition $A$ into the zigzag partition, and we keep track of the resulting change in the cut weight. To that end, we only need to take into account cross-pair edges, since edges within each pair are in the cut both before and after the transformation. Consider two generic pairs $(1,2)$ and $(3,4)$ that are in a $C$-pair and $\bar{C}$-pair position respectively (they need not be adjacent). Let us write $W_{C\bar{C}}$ and $W_{CC}$ to denote the total weight of edges between the distinct pairs before and after we transform the pair $(3,4)$ into a $C$-pair, respectively. Evoking monotonicity, there are six cases to take into consideration (Figure~\ref{fig:Full edges between C pairs} is helpful here): 
	
		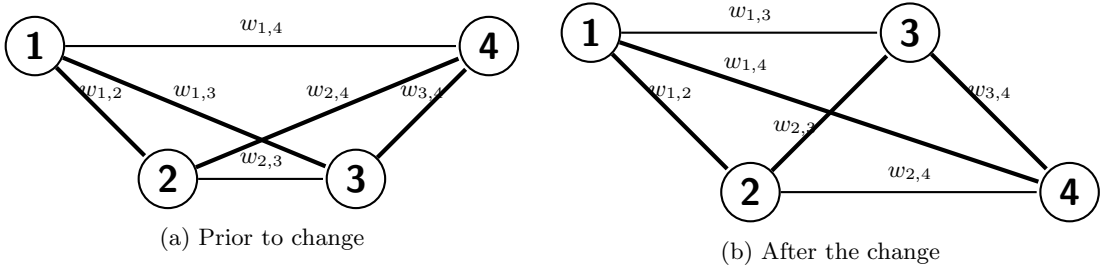
\begin{figure}[H]
			\centering
			\begin{subfigure}{.5\textwidth}
				\centering
				\begin{tikzpicture}
				[-,>=stealth',shorten >=1pt,auto,node distance=2.5cm,thick,main node/.style= {circle,draw,font=\sffamily\Large\bfseries}]
				
				\node[main node] (1) {1};
				\node[main node] (2) [below right of=1] {2};
				\node[main node] (3) [right of=2] {3};
				\node[main node] (4) [above right of=3] {4};
				
				\path[every node/.style={font=\sffamily\small}]
				(1) edge [right, ultra thick] node[pos=0.5,above] {$w_{1,2}$} (2)
				edge [right, ultra thick] node[pos=0.5,above] {$w_{1,3}$} (3)
				edge [right] node[pos=0.5,above] {$w_{1,4}$} (4)
				(2) edge [right] node[pos=0.5,above] {$w_{2,3}$} (3)
				edge [right, ultra thick] node[pos=0.5,above] {$w_{2,4}$} (4)
				(3) edge [right, ultra thick] node[pos=0.5,above] {$w_{3,4}$} (4);
				
				\end{tikzpicture}
				\caption{Prior to change}
				\label{fig:Full edges between C pairs Prior}
			\end{subfigure}\hfill
			\begin{subfigure}{.5\textwidth}
				\centering
				\begin{tikzpicture}
				[-,>=stealth',shorten >=1pt,auto,node distance=3cm,thick,main node/.style= {circle,draw,font=\sffamily\Large\bfseries}]

				\node[main node] (1) {1};
				\node[main node] (2) [below right of=1] {2};
				\node[main node] (3) [above right of=2] {3};
				\node[main node] (4) [below right of=3] {4};
				
				\path[every node/.style={font=\sffamily\small}]
				(1) edge [right, ultra thick] node[pos=0.5,above] {$w_{1,2}$} (2)
				edge [right] node[pos=0.5,above] {$w_{1,3}$} (3)
				edge [left, ultra thick] node[pos=0.3,above] {$w_{1,4}$} (4)
				(2) edge [right, ultra thick] node[pos=0.2,above] {$w_{2,3}$} (3)
				edge [right] node[pos=0.5,above] {$w_{2,4}$} (4)
				(3) edge [right, ultra thick] node[pos=0.5,above] {$w_{3,4}$} (4);
				
				\end{tikzpicture}
				\caption{After the change}
				\label{fig:Full edges between C pairs Posterior}
			\end{subfigure}
			\caption{Edges between C pairs (thick edges are in the cut)}
			\label{fig:Full edges between C pairs}
		\end{figure}

	\begin{enumerate}
        \item Edges between all 4 members of the $C,\bar{C}$-pairs are positive. Since the edges $(1,3)$ and $(2,4)$ are overlapping, then by additivity there is no change in the weight:\label{item:all}
		\begin{align}
		W_{C\bar{C}}&=w_{1,3}+w_{2,4}
		\\&=w_{1,4}+w_{2,3}
		\\&=W_{CC}.
		\end{align}
				
		\item $w_{1,3},w_{2,3},w_{2,4}>0$ and $w_{1,4}=0$: \label{item:1}
		\begin{align}
		W_{C\bar{C}} &=w_{1,3}+w_{2,4}
		\\&=(1-p)p_3-pp_1+(1-p)p_4-pp_2
		\\&=(1-p)p_3-pp_2+(1-p)p_4-pp_1
		\\&=w_{2,3}+(1-p)p_4-pp_1
		\\&\leq w_{2,3}
                \\&=W_{CC}. 
		\end{align}
		The inequality follows since $w_{1,4}=0$ implies $(1-p)p_4-pp_1\leq 0$.

		\item $w_{2,3}>0,w_{2,4}\geq 0$ and $w_{1,3}, w_{1,4}=0$. In this case $W_{C\bar{C}}=w_{2,4}\leq w_{2,3}=W_{CC}$, by monotonicity. \label{item:2}
		\item $w_{2,3}>0,w_{1,3}\geq 0$ and $w_{1,4}, w_{2,4}=0$. In this case $W_{C\bar{C}}=w_{1,3}\leq w_{2,3}=W_{CC}$, by monotonicity.  \label{item:3}
		
		\item $w_{2,3}>0$ and $w_{1,4}, w_{2,4} ,w_{1,3}=0$. In this case $W_{C\bar{C}}=0<w_{2,3}=W_{CC}$. \label{item:4}
		
		\item $w_{1,4}, w_{2,4} ,w_{1,3}, w_{2,3}=0$. In this case $W_{C\bar{C}}=W_{CC}=0$. \label{item:none}

	\end{enumerate}	
We conclude that transforming $A$ into the zigzag partition can only increase the cut weight. Now, it can be directly verified that if $A$ is zigzag-equivalent, which corresponds to cases~\ref{item:all} and~\ref{item:none} above, then all the inequalities hold with equality, essentially since in this case additivity holds. Conversely, and assuming non-degeneracy, if $A$ is not zigzag-equivalent there must exist a $C$-pair and a $\bar{C}$-pair such that some but not all of the cross-pair edges exist. This situation falls into cases~\ref{item:1},~\ref{item:2},~\ref{item:3} or~\ref{item:4} above. In all those cases, it can again be directly verified that the inequalities are strict under the non-degeneracy assumption. 
\end{proof}

\subsection{Cutoff Rate with Feedback}\label{subsec:cutoff}
Consider the case of communicating a message $W\sim \mathrm{Uniform}(\{1,\ldots,M\})$ over $n$ independent uses of a Binary Symmetric Channel (BSC) with crossover probability $p$. A {\em coding scheme} is a mapping $e:\{1,\ldots,M\}\to \{0,1\}^n$ from the message set to the channel inputs. Let $Y^n$ denote the channel outputs corresponding to the input $e(W)$. The \emph{cutoff rate $R_{\mathrm{cutoff}}(p)$} associated with the channel is the supremum over all rates $R$ for which there exists a sequence (in $n$) of coding scheme with $M=2^{nR}$, such that given $Y^n$, the expected number of messages that are more likely than the correct message tends to zero, or equivalently, the minimal conditional guessing time tends to one: 
\begin{align}
	\lim_{n\to\infty}\guess(W\mid Y^n) = 1. 
\end{align}
It is well known~\cite{arikan1996inequality} that 
\begin{align}
	R_{\textrm{cutoff}}(p) = 1-h_{\frac{1}{2}}(p),
\end{align}
where $h_{\frac{1}{2}}(p)=2\log\left(\sqrt{p}+\sqrt{1-p}\right)$ is the binary R\'{e}nyi entropy of order $1/2$. Arikan~\cite{arikan1988upper,arikan1996inequality} has characterized the cutoff rate for general discrete memoryless channels, and also for general moments of the guessing time. Here we restrict our discussion to the BSC and the first moment. 

Now assume that the channel is equipped with instantaneous noiseless feedback from the receiver back to the transmitter. A {\em feedback coding scheme} is a sequence of mappings $e_k:\{1,\ldots,M\}\times \{0,1\}^{k-1}\to \{0,1\}$, such that the input to the channel at time $k$ is $e_k(W,Y^{k-1})$. The \emph{cutoff rate with feedback $R_{\mathrm{cutoff,fb}}(p)$} is defined similarly to its no-feedback counterpart, where now feedback coding schemes are allowed. It was conjecture by Arikan~\cite{arikan1988upper} and recently shown by Bunte and Lapidoth~\cite{bunte2013average,bunte2014listsize} that $R_{\mathrm{cutoff,fb}}(p)=R_{\mathrm{cutoff}}(p)$, i.e., that feedback does not increase the cutoff rate (for general discrete memoryless channels). This was achieved by proving that 
\begin{align}
	\guess(W\mid Y^n) \geq \frac{2^{n(R-R_{\mathrm{cutoff}}(p))}}{1+nR}. 
\end{align} 
for any feedback coding scheme of rate $R$. Thus in fact, the definition of the cutoff rate (either with or without feedback) can be slightly relaxed; it is the maximal rate $R$ such that the message guessing can be made sub-exponential, i.e., such that 
\begin{align}
\lim_{n\to\infty}\frac{1}{n}\log\guess(W\mid Y^n) = 0,  
\end{align}
 
In this section, we provide an explicit feedback communication scheme that achieves the cutoff rate in this slightly relaxed sense. Not surprisingly, our scheme is to simply use the zigzag partition repeatedly; namely, at each step $k$ the transmitter partitions the current posterior distribution of $W$ given $Y^{k-1}$ using the zigzag rule, and generates the channel input $X_k$ as the indicator pertaining to the side of the partition where the message $W$ lies. This can be thought of as the ``guessing analogue'' of the probabilistic bisection scheme of Horstein~\cite{horstein1963sequential}, which uses a median partition of the posterior, and was later shown to achieve channel capacity within the posterior matching framework~\cite{shayevitz2011optimal}. We prove the following. 
\begin{theorem}\label{thrm:zz_cutoff}
For any rate $R < R_{\textrm{cutoff,fb}}(p)$, using the zigzag partition scheme described above yields a guessing time of $\guess(W\mid Y^n) = O(n^2)$. 
\end{theorem}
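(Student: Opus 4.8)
The plan is to track the guessing time $\guess(W\mid Y^{k-1})$ as $k$ increases and show it shrinks geometrically in expectation until it reaches a constant floor. By the noiseless analysis of Example~\ref{ex:guessing_ent}, if the channel were noiseless each zigzag query would exactly halve (up to a constant) the current guessing time; in the noisy case, the key point is that one step of the zigzag partition reduces the \emph{expected} guessing time by the max-cut weight of the associated graph $\m{G}$, which by Theorem~\ref{thrm:zz_weak_bound} is at least $\tfrac12\maxcut(\m{G})$, and by Theorem~\ref{thrm:graph_formulation} the max-cut equals the maximal possible reduction. So the natural potential is $\Phi_k \dfn \E\big[\guess(W\mid Y^{k})\big] - c$ for a suitable constant $c=c(p)$, and I would aim to show $\Phi_k \leq \rho\,\Phi_{k-1}$ for some $\rho = \rho(p) < 1$, from which $\E[\guess(W\mid Y^n)]$ is bounded.

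\textbf{Key steps, in order.}
\begin{enumerate}
\item \emph{Reduce to a per-step contraction.} Condition on the posterior $P_{W\mid Y^{k-1}}$ and apply Theorem~\ref{thrm:graph_formulation} with this posterior playing the role of $P_X$: after the zigzag query at step $k$ (and receiving $Y_k$), the new conditional guessing time is $\guess(W\mid Y^{k-1}) - \cut_\zz(\m{G})$, where $\m{G}$ is built from the current posterior.
\item \emph{Lower-bound the cut.} Show that $\cut_\zz(\m{G})$ is, up to an additive $O(1-2p)$ term, at least a constant fraction of $\guess(W\mid Y^{k-1}) - \tfrac12$. The cleanest route: combine Theorem~\ref{thrm:graph_formulation} with the noiseless-case formula (Proposition~\ref{Proposition: Noiseless Case}) evaluated at the parameter $p$ — i.e.\ use that the zigzag cut in $\m{G}_X$ is closely tied to the noiseless zigzag cut scaled by $(1-2p)$, together with monotonicity. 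Concretely, one gets something like $\guess(W\mid Y^{k-1}) - \E[\guess(W\mid Y^k) \mid Y^{k-1}] \geq \gamma\big(\guess(W\mid Y^{k-1}) - \tfrac12\big) - O(1-2p)$ for an explicit $\gamma=\gamma(p)\in(0,1]$ related to $2^{-h_{1/2}(p)}$-type quantities.
\item \emph{Iterate and take expectations.} Taking expectations over $Y^{k-1}$ and unrolling the recursion gives $\E[\guess(W\mid Y^n)] \leq (1-\gamma)^n\big(\guess(W) - \tfrac12\big) + O\big((1-2p)/\gamma\big)$. Since $\guess(W) \leq (M+1)/2 = (2^{nR}+1)/2$, this is $O\big(2^{nR}(1-\gamma)^n\big) + O(1)$, which is subexponential — in fact $o(1)$ plus a constant — precisely when $2^{R}(1-\gamma) < 1$, i.e.\ when $R < -\log(1-\gamma)$.
\item \emph{Identify the threshold and get the polynomial bound.} Show that the exponent $-\log(1-\gamma(p))$ equals the cutoff rate $1 - h_{1/2}(p) = R_{\textrm{cutoff,fb}}(p)$. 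For any $R$ strictly below this, the geometric term decays, but to get the claimed $O(n^2)$ (rather than merely subexponential) one argues more carefully near the floor: the ``$-\tfrac12$'' and ``$O(1-2p)$'' constants prevent pure geometric decay all the way down, so a second-moment / careful-constants accounting of how the additive slack accumulates over $n$ steps yields a bound that is polynomial — quadratic — in $n$. Alternatively, invoke the known large-deviations characterization of the cutoff rate (as in Example~\ref{ex:guessing_ent} and~\cite{arikan1996inequality,bunte2014listsize}): for $R<R_{\textrm{cutoff,fb}}(p)$ the list-size / guessing moments are $O(\mathrm{poly}(n))$, and the zigzag scheme's guessing time is dominated by that quantity.
\end{enumerate}

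\textbf{Main obstacle.} The hard part is Step~2 and the tightness in Step~4: getting the \emph{right} contraction factor $\gamma(p)$, one that makes the threshold in Step~4 come out to be exactly $1-h_{1/2}(p)$ rather than something smaller. The weak bound of Theorem~\ref{thrm:zz_weak_bound} ($\cut_\zz \geq \tfrac12\maxcut$) only guarantees a contraction by a factor related to half the max-cut, which is generally too weak to reach the true cutoff rate; one needs instead to lower-bound $\cut_\zz(\m{G})$ \emph{directly} in terms of the Rényi-$\tfrac12$ structure of the posterior — e.g.\ by relating $\sum_{i,j} w_{i,j}$ to $\big(\sum_i \sqrt{p_i}\big)^2$-type quantities via the $(1-p)p_j - p p_i$ form of the weights — and then tracking how $\big(\sum_i \sqrt{P_{W\mid Y^{k-1}}(i)}\big)$ evolves as a supermartingale-like object under the zigzag dynamics. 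Controlling the additive constants carefully enough to land on $O(n^2)$ rather than a larger polynomial is the remaining delicate bookkeeping.
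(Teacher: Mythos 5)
Your proposal takes a genuinely different route from the paper, but it contains a gap in its central step that I don't think can be repaired without essentially reproducing the paper's argument.

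The paper's proof exploits a very specific structural fact that your contraction strategy bypasses: starting from the uniform prior, the zigzag dynamics produce posteriors whose associated graph $\m{G}_k$ is a \emph{disjoint union of $k+1$ cliques} with a binomial profile of sizes $m_j^{(k)}=M2^{-k}\binom{k}{j}$ and per-symbol probabilities $q_j^{(k)}=\tfrac{1}{M}2^k p^j(1-p)^{k-j}$. The disjointness arises because the probability ratio between adjacent levels is exactly $(1-p)/p$, which is precisely the threshold in~\eqref{eq:weights_def}. With that clique structure in hand, the per-step cut weight reduces to $\tfrac{1-2p}{4}\sum_j (m_j^{(k)})^2 q_j^{(k)}$ (a balanced bisection of each clique), and the sum over $k$ is evaluated in closed form via a Legendre-polynomial generating-function identity; the threshold $1-h_{1/2}(p)$ falls out of the asymptotics of $P_k\left(\tfrac{1}{1-2p}\right)$. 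This is not a contraction argument at all: it is an exact computation of the cumulative reduction, made tractable because the posterior's shape is known explicitly at every step.

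The gap in your Step~2 is that the proposed inequality
$$\guess(W\mid Y^{k-1}) - \E\big[\guess(W\mid Y^k)\,\big|\,Y^{k-1}\big] \;\geq\; \gamma\big(\guess(W\mid Y^{k-1})-\tfrac12\big) - O(1-2p)$$
with $1-\gamma = \tfrac12(\sqrt{p}+\sqrt{1-p})^2$ is simply \emph{false} for general posterior distributions. If the posterior probabilities decay so fast that $p_i/p_{i+1}\geq\tfrac{1-p}{p}$ for all $i$, then $\m{G}$ is edgeless, $\cut_\zz(\m{G})=0$, and the guessing time does not decrease at all even though it may be arbitrarily large. So no distribution-free contraction constant $\gamma(p)>0$ exists. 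You recognize this is the obstacle, and you gesture at rescuing the step by ``tracking $\sum_i\sqrt{P_{W\mid Y^{k-1}}(i)}$'' as a supermartingale; but the cut weight of a general graph is not controlled by a R\'enyi-$\tfrac12$ quantity alone (it depends on the entire ordered gap structure), so one would still have to establish what the zigzag posteriors actually look like — which collapses back into the paper's explicit structural analysis. Two smaller issues: (i) the weak bound $\cut_\zz\geq\tfrac12\maxcut$ from Theorem~\ref{thrm:zz_weak_bound}, which you flag as too weak, is indeed too weak — it would only yield a threshold strictly below $1-h_{1/2}(p)$; (ii) your Step~4 doesn't actually produce the $O(n^2)$ bound; in the paper the $n^2$ term is an accounting of the ``leftover symbols'' from odd-sized cliques in the integer-valued version of the dynamics, and your potential-function framing has no mechanism to generate it (a clean contraction with an $O(1)$ additive floor would give $O(1)$, not $O(n^2)$, below threshold). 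Falling back on ``invoke the known characterization of the cutoff rate'' would prove existence of a good scheme, not that the zigzag scheme achieves it, which is the content of the theorem.
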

\begin{remark}
  If one could claim that using the zigzag partition in each step is optimal, then that would also recover the converse of~\cite{bunte2014listsize}. However, unlike the noiseless case (see Example~\ref{ex:guessing_ent}), that is not immediately clear. This multi-step problem can be thought of as a finite-horizon Markov decision process, and it does not appear trivial to show that being ``greedy''  in each step (i.e., using the zigzag) is the best overall strategy. Nevertheless, Theorem~\ref{thrm:zz_cutoff} and the converse in~\cite{bunte2014listsize} show that this scheme is  optimal at least in the exponential sense. 
\end{remark}
\begin{proof}
For the meanwhile, let us ignore integer issues and assume that the $M$ symbols are infinitely divisible, i.e., that we can chop up a symbol into any number of parts and place them on opposite sides of the partition. Since $W$ is uniform, the graph $\m{G}_0$ associated with it (prior to transmission) is a complete graph. We can therefore use an arbitrary equal partition for the first transmission. After the first transmission and given the first channel output $Y_1$, we are left with a posterior distribution whose graph $\m{G}_1$ is a disjoint union of two cliques, each of size $M/2$, where disjointness is a result of the ratio between symbols that have received a ``yes'' and those that received a ``no'', which is exactly $\frac{1-p}{p}$. The probabilities of all members in the first (resp. second) clique are hence all equal to $\tfrac{2p}{M}$ (resp. $\tfrac{2(1-p)}{M}$). The zigzag strategy at this point is thus to partition each of the cliques equally, which will result in $\m{G}_2$ having three cliques of sizes $\{M/4,M/2,M/4\}$, with probabilities in each clique equal to $\tfrac{4}{M}\cdot \{p^2, p(1-p), (1-p)^2\}$ respectively. Continuing this binomial process, we can see that after $k$ channel uses the associated graph $\m{G}_k$ is a disjoint union of $k+1$ cliques of sizes $\{m_j^{(k)} = M\cdot 2^{-k}{k \choose j}\}_{j=0}^k$. The probability of each member in the $j$th cliques is $q_j^{(k)}=\frac{1}{M}2^{k}p^{j}(1-p)^{k-j}$. Note that this structure holds regardless of the specific channel output sequence $Y^k$, which only determines the specific members of each clique but not their sizes or distribution. 

Let us now resolve the issue of non-integer partitions. Using the actual zigzag strategy, whenever a clique is of odd size we have a single ``leftover symbol'' that gets assigned to an arbitrary side of the partition, rather than being ``split in half''  as we fictitiously assumed above. Since at each step $k$ there are exactly $k+1$ cliques, the number of such leftover symbols during the entire process is at most $O(n^2)$. Since the total weight of edges connected to a single vertex in any graph of the form we are considering is at most one, the total weight associated with the leftover symbols is also $O(n^2)$. Hence, computing the total ``cut weight'' induced by all the non-integer partitions is $O(n^2)$-close to the cut weight induced by the zigzag partitions strategy. 

Following the above, let us estimate the minimal expected guessing time following $n$ steps of the zigzag strategy, by summing up all the intermediate cut weights:
\begin{align}
\guess(W\mid Y^n)&= \guess(W) - \frac{1-2p}{4}\sum_{k=0}^{n-1}\sum_{j=0}^k m_j^2 q_j  + O(n^2)\\ 
&=\frac{M+1}{2} - \frac{1-2p}{4}\sum_{k=0}^{n-1}\sum_{j=0}^k m_j^2 q_j  + O(n^2)\\
&=\frac{M+1}{2} - \frac{(1-2p)}{4}\sum_{k=0}^{n-1}M\cdot 2^{-k}\sum_{j=0}^k {k \choose j}^2p^j(1-p)^{k-j}  + O(n^2)\\
&= 2^{nR-1}\left(1-\left(\frac{1}{2}-p\right) \sum_{k=0}^{n-1}\left(\frac{1-p}{2}\right)^k \sum_{j=0}^k {k \choose j}^2 \left(\frac{p}{1-p}\right) ^{j}\right)  + O(n^2)\\
& = 2^{nR-1}(1-B_n) + O(n^2), \label{eq:guess_final}
\end{align}
where $B_n$ was implicitly defined. To proceed, let us recall two well known properties of the Legendre polynomials. 
\begin{lemma}
Let $P_k(x)$ be the $k$th-degree Legendre polynomial. Then for any $0 <  \alpha < 1$
\begin{align}
\sum_{j=0}^k{k \choose j}^2\alpha^j = (1-\alpha)^kP_k\left(\frac{1+\alpha}{1-\alpha}\right)  
\end{align}  
and 
\begin{align}
\sum_{k=0}^\infty\alpha^kP_k(x) = \frac{1}{\sqrt{1+\alpha^2-2\alpha x}}.
\end{align}
\end{lemma}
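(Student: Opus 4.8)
The plan is to obtain both identities from classical properties of the Legendre polynomials: the first from Rodrigues' formula, and the second as the standard Legendre generating function.

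\emph{First identity.} I would start from Rodrigues' formula $P_k(x) = \tfrac{1}{2^k k!}\tfrac{d^k}{dx^k}(x^2-1)^k$, write $(x^2-1)^k = (x-1)^k(x+1)^k$, and apply the Leibniz rule for the $k$th derivative of a product. Since $\tfrac{d^j}{dx^j}(x-1)^k = \tfrac{k!}{(k-j)!}(x-1)^{k-j}$ and $\tfrac{d^{k-j}}{dx^{k-j}}(x+1)^k = \tfrac{k!}{j!}(x+1)^{j}$, each Leibniz term contributes ${k \choose j}\cdot\tfrac{k!}{(k-j)!}\cdot\tfrac{k!}{j!} = k!\,{k \choose j}^2$ times $(x-1)^{k-j}(x+1)^{j}$, so the $\tfrac{1}{k!}$ in Rodrigues' formula cancels and one obtains the explicit representation
\begin{align}
P_k(x) = 2^{-k}\sum_{j=0}^{k}{k \choose j}^2 (x-1)^{k-j}(x+1)^{j}.
\end{align}
I would then substitute $x = \tfrac{1+\alpha}{1-\alpha}$, for which $x-1 = \tfrac{2\alpha}{1-\alpha}$ and $x+1 = \tfrac{2}{1-\alpha}$. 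Pulling out the common powers of $2$ and of $(1-\alpha)$ reduces the sum to $(1-\alpha)^{-k}\sum_{j=0}^{k}{k \choose j}^2\alpha^{k-j}$, and re-indexing $j\mapsto k-j$ via ${k \choose j}={k \choose k-j}$ turns it into $(1-\alpha)^{-k}\sum_{j=0}^{k}{k \choose j}^2\alpha^{j}$; multiplying by $(1-\alpha)^k$ gives the claim. As this is a polynomial identity in $\alpha$, it holds for every $\alpha\neq 1$.

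\emph{Second identity.} This is the classical Legendre generating function $\sum_{k\geq 0}P_k(x)\,t^k = (1-2xt+t^2)^{-1/2}$, which one may simply cite; a self-contained derivation runs as follows. Write $g(x,t)=(1-2xt+t^2)^{-1/2}=\sum_{k\geq 0}c_k(x)\,t^k$ for $|t|$ small, observe directly that $c_0(x)=1$ and $c_1(x)=x$, and note that $g$ satisfies $(1-2xt+t^2)\,\partial_t g = (x-t)\,g$. Matching the coefficient of $t^k$ on both sides yields $(k+1)c_{k+1}(x) = (2k+1)x\,c_k(x) - k\,c_{k-1}(x)$, which is exactly Bonnet's three-term recurrence; since the Legendre polynomials are the unique sequence satisfying this recurrence together with $P_0=1$, $P_1=x$, we conclude $c_k=P_k$, and setting $t=\alpha$ gives the formula.

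\emph{Where the work is.} There is no genuine obstacle: both statements are textbook, and the only thing requiring attention is the algebraic bookkeeping in the substitution $x=(1+\alpha)/(1-\alpha)$ and, for the self-contained route, correctly matching the generating-function coefficients to Bonnet's recurrence. One should also keep in mind that the series in the second identity converges only for $|\alpha|<|x-\sqrt{x^2-1}|$ rather than for all $\alpha\in(0,1)$; this causes no trouble in the application, where $x=1/(1-2p)$ and $\alpha=(1-2p)/2$, for which the inequality $4\sqrt{p(1-p)}\bigl(1-\sqrt{p(1-p)}\bigr)<1$ --- equivalently $\bigl(2\sqrt{p(1-p)}-1\bigr)^2>0$ --- places $\alpha$ strictly inside the disk of convergence whenever $p<1/2$.
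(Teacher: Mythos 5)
Your proof is correct, and in fact goes beyond what the paper provides: the paper does not prove this lemma at all, introducing it only with the phrase ``let us recall two well known properties of the Legendre polynomials,'' so there is no in-paper argument to compare against. Your derivation of the explicit representation $P_k(x)=2^{-k}\sum_{j=0}^{k}{k\choose j}^2(x-1)^{k-j}(x+1)^{j}$ from Rodrigues' formula via the Leibniz rule is the standard route, and the substitution $x=(1+\alpha)/(1-\alpha)$ with the re-indexing $j\mapsto k-j$ gives the first identity exactly. The second identity is the classical generating function, and your self-contained derivation via the ODE $(1-2xt+t^2)\partial_t g=(x-t)g$ and matching against Bonnet's recurrence is sound. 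The one point worth highlighting is your convergence remark: the paper uses the generating function at $x=1/(1-2p)>1$ and $\alpha=(1-2p)/2$, where the series has finite radius of convergence $x-\sqrt{x^2-1}$, and your check that this reduces to $(2\sqrt{p(1-p)}-1)^2>0$ (strict for $p<1/2$) closes a gap the paper leaves implicit when it evaluates $B_\infty$ by plugging into the closed form.
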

Using these relations, we have
\begin{align}
B_\infty &= \left(\frac{1}{2}-p\right) \sum_{k=0}^{\infty}\left(\frac{1-p}{2}\right)^k \sum_{j=0}^k {k \choose j}^2 \left(\frac{p}{1-p}\right) ^{j} \\ 
&= \left(\frac{1}{2}-p\right)\sum_{k=0}^\infty \left(\frac{1}{2}-p\right)^kP_k\left(\frac{1}{1-2p}\right)\\
&= \left(\frac{1}{2}-p\right)\frac{1}{\sqrt{1+(1/2-p)^2-2(1/2-p)/(1-2p)}}\\
&=1.
\end{align}
With this in hand, let us define 
\begin{align}
  \beta^*(p) \dfn \max_{\alpha\in[0,1]}2h(\alpha)+\alpha\log\frac{p}{1-p},
\end{align}
where $h(x)=-x\log{x}-(1-x)\log(1-x)$ is the binary entropy function. Note that the maximizing value is $\alpha^* = \frac{\sqrt{p}}{\sqrt{p}+\sqrt{1-p}}$. We can now rewrite~\eqref{eq:guess_final} as follows (using $j=\alpha k$):
\begin{align}
  \guess(W\mid Y^n) &= 2^{nR-1}(B_\infty-B_n) + O(n^2) \\ 
& = 2^{nR-1}\left(\frac{1}{2}-p\right) \sum_{k=n}^{\infty}\left(\frac{1-p}{2}\right)^k \sum_{j=0}^k {k \choose j}^2 \left(\frac{p}{1-p}\right) ^{j}  + O(n^2)\\
&\leq  2^{nR-1}\left(\frac{1}{2}-p\right)\sum_{k=n}^{\infty}\left(\frac{1-p}{2}\right)^k (k+1)\,2^{k\beta^*(p)} + O(n^2)\\
&=  2^{n(R+\beta^*(p))-1}\left(\frac{1}{2}-p\right)\sum_{k=0}^{\infty} (n+k+1)\,2^{k\left(\log\frac{1-p}{2}+\beta^*(p)\right)} + O(n^2)\\
&= 2^{n(R+\log\frac{1-p}{2}+\beta^*(p))}\cdot O(1) + O(n^2). 
\end{align}
We have used the standard binary entropy bound ${k \choose j}\leq 2^{kh(j/k)}$ for the binomial coefficients~\cite{cover2012elements}, and the last equality holds since $\log\frac{1-p}{2}+\beta^*(p) < 0$, which is verified below in~\eqref{eq:hren1}-\eqref{eq:hren2}. Thus, we conclude that using the zigzag strategy yields $\guess(W\mid Y^n) = O(n^2)$ whenever the rate satisfies
\begin{align}\label{eq:hren1}
  R &< -\left(\log\frac{1-p}{2}+\beta^*(p)\right)\\
& = 1-\log(1-p) -\frac{\sqrt{p}}{\sqrt{p}+\sqrt{1-p}}\log\frac{p}{1-p} - 2h\left(\frac{\sqrt{p}}{\sqrt{p}+\sqrt{1-p}}\right) \\
& = 1-h_{\frac{1}{2}}(p),\label{eq:hren2} 
\end{align}
where the last equality follows by direct computation. This concludes the proof. 
\end{proof}

\section{Discussion}
We have shown that the zigzag partition, which amounts to querying whether $X$ has an odd or even index when ordered in descending order of probabilities, is the best question to ask in terms of reducing the minimal expected guessing time, up to a constant of at most $(1-2p)/4$, where $p$ is the probability of getting a wrong answer. This small constant is generally an artifact of our proof technique, since for $p=0$ we know that the zigzag partition is precisely optimal. Moreover, as stated in Conjecture~\ref{conj}, we believe that the zigzag partition is precisely optimal for any distribution and any error probability $p$.  Our paper only deals with the expectation of the guessing time; it may be desirable to explore partitions that minimize other moments of the guessing time as well. Our graph-theoretic approach relies heavily on the linearity of expectation and does not seem to naturally lend itself to this problem, hence it is likely that a different approach would be needed. Additionally, our treatment has been limited to binary questions and a binary symmetric channel to the Oracle. It is interesting to study the structure of optimal partitions in more general cases, where either the questions are multiple-choice, the channel is asymmetric, or both. 

\section{Acknowledgments} 
The authors are grateful to the reviewers and the associate editor for their many comments and suggestions that have significantly improved the presentation of the paper. 

\appendix 

\section{Appendix}
\begin{proof}[Proof of Proposition~\ref{lem:guessing_time_prop}]
We first note that $G(X\mid Y)\leq G(X)$, a fact that can be verified directly from definition or observed as a special case of~\cite[Corollary 1]{arikan1996inequality}. Thus, the function $\guess(X)$ is concave in the distribution $P_X$. Furthermore, $\guess(X)$ is also permutation invariant by definition. Thus, $\guess(X)$ is Schur-concave. Since $P_{X'}$ is obtained from $P_X$ by multiplying by a doubly-stochastic matrix, then $P_{X'}$ is majorized by $P_X$. The result follows since Schur-concave functions respect that majorization partial order~\cite{bhatia2013matrix}. 
\end{proof}

\begin{proof}[Proof of Corollary~\ref{cor:guess_simple_bounds}]
A deterministic distribution strictly majorizes any other non-deterministic distribution, and the uniform distribution is strictly majorized by any other distribution. The result now follows immediately from Proposition~\ref{lem:guessing_time_prop} and a direct computation. 
\end{proof}

\begin{proof}[Proof of Lemma~\ref{lem:non-degenerate}]
Let $Y_A = \ind(X\in A)\oplus V$. Applying Bayes law, the posterior distribution of $X$ given $Y_A$ is given by 
\begin{align}
P_{X|Y_A}(i\mid y) = \frac{1}{P_{Y_A}(y)} \cdot\left\{\begin{array}{lr} (1-p)p_i & y = \ind(i\in A) 
\\ pp_i & y = \ind(i\not\in A)\end{array}\right..
\end{align}
Note that $P_{Y_A}(y) \neq 0$ for any $0<p<1$ and $y\in\{0,1\}$. We can see that $P_{X|Y_A}(\cdot \mid y)$ has all distinct symbol probabilities for a fixed $A, y$ and $0<p<1$ if and only if $p_i\neq p_j$ whenever $i$ and $j$ are on the same side of the partition $A$, and $\frac{p_i}{p_j}\not\in\{\frac{p}{1-p}, \frac{1-p}{p}\}$ whenever $i,j$ are on opposite sides of the partition $A$. Thus, $P_X$ is non-degenerate if and only if $\frac{p_i}{p_j}\not\in\{1,\frac{p}{1-p}, \frac{1-p}{p}\}$ for all $i,j\in\{1,\ldots,N\}$ with $i\neq j$. 

Now, let $P_X$ be any distribution over $\{1,\ldots,N\}$. Fix some $\delta > 0$ and draw a $W^n \stackrel{\mathrm{i.i.d}}{\sim}\mathrm{Uniform}([0,\delta])$. Define the (random) distribution $Q_X$ to have symbol probabilities
\begin{align}
	q_i \dfn  \frac{p_i + W_i}{1+\sum_{j=1}^NW_j}.
\end{align}  
First, note that 
\begin{align}
	|p_i-q_i| & \leq \left|\frac{p_i\sum_{j=1}^NW_j - W_i}{1+\sum_{j=1}^NW_j}\right|\\
	& \leq \delta (N+1).
\end{align}
with probability one. Setting any $\delta < \frac{\varepsilon}{N(N+1)}$, we have that 
\begin{align}
	\Pr(\|P_X-Q_X\|_1  \geq \varepsilon) = 0. 
\end{align}
Now, the probability ratio
\begin{align}
	\frac{q_i}{q_j} = \frac{p_i+W_i}{p_j+W_j}
\end{align}
has a continuous p.d.f with support $\left[\frac{p_i}{p_j+\delta}, \frac{p_i+\delta}{p_j}\right]$. Hence, 
\begin{align}
	\Pr\left(\frac{q_i}{q_j} = c\right) =0
\end{align}
for any constant $c$. Applying the union bound, we see that $Q_X$ is non-degenerate and $\|P_X-Q_X\|_1 < \varepsilon$, with probability one. In particular, there exists a deterministic choice of such a distribution. 
\end{proof}

\begin{proof}[Proof of Lemma~\ref{lem:eqv_psd}]
  $\mathbf{x}^T\mathbf{C}\mathbf{x} \geq 0$ if and only if $(\mathbf{A}\mathbf{x})^T\mathbf{B}(\mathbf{A}\mathbf{x}) \geq 0$.
\end{proof}

\begin{proof}[Proof of Lemma~\ref{lem:gdisks}]
	This result is classical; we give a short proof for completeness. Let $\lambda$ be an eigenvalue of $\mathbf{B}$ and let $\mathbf{x}$ be a corresponding eigenvector, and let $i=\argmax_j| x_j|$. Then $| x_i| > 0$ unless $\mathbf{x}$ is the all-zeros vector, and $|\frac{\mathbf{x}_j}{\mathbf{x}_i}|\leq 1$ for any $j$.
	Since $\mathbf{Bx}=\lambda\mathbf{x}$, we have 
	\begin{align}
	\sum_{j\neq i}b_{i,j}x_j=\lambda x_i-b_{i,i}x_i,	
	\end{align}
and therefore 
\begin{align}
|\lambda-b_{i,i}| &= \big{|}\sum_{j\neq i}\frac{b_{i,j}x_j}{x_i}\big{|}\\
&\leq \big{|}\sum_{j\neq i}b_{i,j}\big{|}\\
&\leq \sum_{j\neq i}\left| b_{i,j}\right|\\
&=R_i. 
\end{align}
\end{proof}

\bibliographystyle{IEEEtran}
\bibliography{GuessingPaper}

\begin{thebibliography}{10}
\providecommand{\url}[1]{#1}
\csname url@samestyle\endcsname
\providecommand{\newblock}{\relax}
\providecommand{\bibinfo}[2]{#2}
\providecommand{\BIBentrySTDinterwordspacing}{\spaceskip=0pt\relax}
\providecommand{\BIBentryALTinterwordstretchfactor}{4}
\providecommand{\BIBentryALTinterwordspacing}{\spaceskip=\fontdimen2\font plus
\BIBentryALTinterwordstretchfactor\fontdimen3\font minus
  \fontdimen4\font\relax}
\providecommand{\BIBforeignlanguage}[2]{{%
\expandafter\ifx\csname l@#1\endcsname\relax
\typeout{** WARNING: IEEEtran.bst: No hyphenation pattern has been}%
\typeout{** loaded for the language `#1'. Using the pattern for}%
\typeout{** the default language instead.}%
\else
\language=\csname l@#1\endcsname
\fi
#2}}
\providecommand{\BIBdecl}{\relax}
\BIBdecl

\bibitem{massey1994guessing}
J.~L. Massey, ``Guessing and entropy,'' in \emph{Information Theory, 1994.
  Proceedings., 1994 IEEE International Symposium on}.\hskip 1em plus 0.5em
  minus 0.4em\relax IEEE, 1994, p. 204.

\bibitem{arikan1996inequality}
E.~Arikan, ``An inequality on guessing and its application to sequential
  decoding,'' \emph{IEEE Transactions on Information Theory}, vol.~42, no.~1,
  pp. 99--105, 1996.

\bibitem{shannon1948mathematical}
C.~E. Shannon, ``A mathematical theory of communication,'' \emph{Bell system
  technical journal}, vol.~27, no.~3, pp. 379--423, 1948.

\bibitem{huffman1952method}
D.~A. Huffman, ``A method for the construction of minimum-redundancy codes,''
  \emph{Proceedings of the IRE}, vol.~40, no.~9, pp. 1098--1101, 1952.

\bibitem{boztas1997comments}
S.~Bozta{\c{s}}, ``Comments on ``an inequality on guessing and its application
  to sequential decoding'','' \emph{IEEE Transactions on Information Theory},
  vol.~43, no.~6, pp. 2062--2063, 1997.

\bibitem{arikan1988upper}
E.~Arikan, ``An upper bound on the cutoff rate of sequential decoding,''
  \emph{IEEE transactions on information theory}, vol.~34, no.~1, pp. 55--63,
  1988.

\bibitem{arikan1998guessing}
E.~Arikan and N.~Merhav, ``Guessing subject to distortion,'' \emph{IEEE
  Transactions on Information Theory}, vol.~44, no.~3, pp. 1041--1056, 1998.

\bibitem{arikan1998joint}
------, ``Joint source-channel coding and guessing with application to
  sequential decoding,'' \emph{IEEE Transactions on Information Theory},
  vol.~44, no.~5, pp. 1756--1769, 1998.

\bibitem{merhav1999shannon}
N.~Merhav and E.~Arikan, ``The {S}hannon cipher system with a guessing
  wiretapper,'' \emph{IEEE Transactions on Information Theory}, vol.~45, no.~6,
  pp. 1860--1866, 1999.

\bibitem{arikan2002guessing}
E.~Arikan and S.~Bozta{\c{s}}, ``Guessing with lies,'' in \emph{IEEE
  International Symposium on Information Theory-Proceedings}, 2002, p. 208.

\bibitem{sundaresan2007guessing}
R.~Sundaresan, ``Guessing under source uncertainty,'' \emph{IEEE Transactions
  on Information Theory}, vol.~53, no.~1, pp. 269--287, 2007.

\bibitem{pfister2004renyi}
C.~E. Pfister and W.~G. Sullivan, ``{R}\'{e}nyi entropy, guesswork moments, and
  large deviations,'' \emph{IEEE Transactions on Information Theory}, vol.~50,
  no.~11, pp. 2794--2800, 2004.

\bibitem{malone2004guesswork}
D.~Malone and W.~G. Sullivan, ``Guesswork and entropy,'' \emph{IEEE
  Transactions on Information Theory}, vol.~50, no.~3, pp. 525--526, 2004.

\bibitem{yamamoto2011channel}
H.~Yamamoto and K.~Okudera, ``Channel coding theorem for the number of guesses
  in decoding,'' in \emph{2011 IEEE International Symposium on Information
  Theory Proceedings}, 2011.

\bibitem{hanawal2011guessing}
M.~K. Hanawal and R.~Sundaresan, ``Guessing revisited: A large deviations
  approach,'' \emph{IEEE Transactions on Information Theory}, vol.~57, no.~1,
  pp. 70--78, 2011.

\bibitem{sason2018improved}
I.~Sason and S.~Verd{\'u}, ``Improved bounds on lossless source coding and
  guessing moments via {R}\'{e}nyi measures,'' \emph{arXiv preprint
  arXiv:1801.01265}, 2018.

\bibitem{arikan2008guessing}
E.~Arikan, ``Guessing and cryptography1,'' \emph{Aspects of Network and
  Information Security}, vol.~17, p. 211, 2008.

\bibitem{malone2012investigating}
D.~Malone and K.~Maher, ``Investigating the distribution of password choices,''
  in \emph{Proceedings of the 21st international conference on World Wide
  Web}.\hskip 1em plus 0.5em minus 0.4em\relax ACM, 2012, pp. 301--310.

\bibitem{christiansen2013guessing}
M.~M. Christiansen, K.~R. Duffy, F.~Calmon, and M.~M{\'e}dard, ``Guessing a
  password over a wireless channel (on the effect of noise non-uniformity),''
  in \emph{Signals, Systems and Computers, 2013 Asilomar Conference on}.\hskip
  1em plus 0.5em minus 0.4em\relax IEEE, 2013, pp. 51--55.

\bibitem{bracher2015guessing}
A.~Bracher, E.~Hof, and A.~Lapidoth, ``Guessing attacks on distributed-storage
  systems,'' in \emph{Information Theory (ISIT), 2015 IEEE International
  Symposium on}.\hskip 1em plus 0.5em minus 0.4em\relax IEEE, 2015, pp.
  1585--1589.

\bibitem{yona2017effect}
Y.~Yona and S.~Diggavi, ``The effect of bias on the guesswork of hash
  functions,'' in \emph{Information Theory (ISIT), 2017 IEEE International
  Symposium on}.\hskip 1em plus 0.5em minus 0.4em\relax IEEE, 2017, pp.
  2248--2252.

\bibitem{renyi1961problem}
A.~R{\'e}nyi, ``On a problem of information theory,'' \emph{MTA Mat. Kut. Int.
  Kozl. B}, vol.~6, pp. 505--516, 1961.

\bibitem{ulam1978adventures}
S.~M. Ulam, ``Adventures of a mathematician. 1976,'' \emph{Charles Scribner's
  Sons, New York}, 1978.

\bibitem{pelc2002searching}
A.~Pelc, ``Searching games with errors—fifty years of coping with liars,''
  \emph{Theoretical Computer Science}, vol. 270, no. 1-2, pp. 71--109, 2002.

\bibitem{berlekamp1964block}
E.~R. Berlekamp, ``Block coding with noiseless feedback,'' Ph.D. dissertation,
  Massachusetts Institute of Technology, 1964.

\bibitem{zigangirov1976number}
K.~S. Zigangirov, ``On the number of correctable errors for transmission over a
  binary symmetrical channel with feedback,'' \emph{Problemy Peredachi
  Informatsii}, vol.~12, no.~2, pp. 3--19, 1976.

\bibitem{horstein1963sequential}
M.~Horstein, ``Sequential transmission using noiseless feedback,'' \emph{IEEE
  Transactions on Information Theory}, vol.~9, no.~3, pp. 136--143, 1963.

\bibitem{shayevitz2011optimal}
O.~Shayevitz and M.~Feder, ``Optimal feedback communication via posterior
  matching,'' \emph{IEEE Transactions on Information Theory}, vol.~57, no.~3,
  pp. 1186--1222, 2011.

\bibitem{gallager1968information}
R.~G. Gallager, \emph{Information theory and reliable communication}.\hskip 1em
  plus 0.5em minus 0.4em\relax Springer, 1968, vol.~2.

\bibitem{alon2004probabilistic}
N.~Alon and J.~H. Spencer, \emph{The probabilistic method}.\hskip 1em plus
  0.5em minus 0.4em\relax John Wiley \& Sons, 2004.

\bibitem{schrijver2002combinatorial}
A.~Schrijver, \emph{Combinatorial optimization: polyhedra and
  efficiency}.\hskip 1em plus 0.5em minus 0.4em\relax Springer Science \&
  Business Media, 2002, vol.~24.

\bibitem{bhatia2013matrix}
R.~Bhatia, \emph{Matrix analysis}.\hskip 1em plus 0.5em minus 0.4em\relax
  Springer Science \& Business Media, 2013, vol. 169.

\bibitem{bunte2013average}
C.~B. A. and Lapidoth, ``On the average-listsize capacity and the cutoff rate
  of discrete memoryless channels with feedback,'' in \emph{Information Theory
  Workshop (ITW), 2013 IEEE}.\hskip 1em plus 0.5em minus 0.4em\relax IEEE,
  2013, pp. 1--5.

\bibitem{bunte2014listsize}
C.~Bunte and A.~Lapidoth, ``On the listsize capacity with feedback,''
  \emph{IEEE Transactions on Information Theory}, vol.~60, no.~11, pp.
  6733--6748, 2014.

\bibitem{cover2012elements}
T.~M. Cover and J.~A. Thomas, \emph{Elements of information theory}.\hskip 1em
  plus 0.5em minus 0.4em\relax John Wiley \& Sons, 2012.

\end{thebibliography}

\end{document}